\newtheorem{theorem}{Theorem}[section]
\newtheorem{corollary}[theorem]{Corollary}
\newtheorem{lemma}[theorem]{Lemma}
\newtheorem{conjecture}[theorem]{Conjecture}
\newtheorem{definition}[theorem]{Definition}
\newcommand{\ket}[1]{|{#1}\rangle}
\newcommand{\RAG}{\mathsf{RAG}}
\newcommand{\CNOT}{\mathsf{CNOT}}
\newcommand{\ZO}{\{0,1\}}
\newcommand{\tO}{\widetilde O}
\newcommand{\matchingTriangles}{\textsc{$\Delta$-Matching Triangles}}
\newcommand{\triangleCollection}{\textsc{Triangle Collection}}
\newcommand{\ksat}{\textup{\textsc{$k$-SAT}}}
\newcommand{\cnfsat}{\textup{\textsc{CNF-SAT}}}
\newcommand{\sat}{\textup{\textsc{SAT}}}
\newcommand{\tsum}{\textup{\textsc{3SUM}}}
\newcommand{\apsp}{\textup{\textsc{APSP}}}
\newcommand{\dmt}{\textup{\textsc{$\Delta$-Matching Triangles}}}
\newcommand{\tc}{\textup{\textsc{Triangle Collection}}}
\newcommand{\zwt}{\textup{\textsc{$0$-Weight Triangle}}}
\newcommand{\mm}{\textup{\textsc{$(\min,+)$-Matrix Multiplication}}}
\newcommand{\apnt}{\textup{\textsc{All-Pairs Negative Triangle}}}
\newcommand{\nt}{\textup{\textsc{Negative Triangle}}}
\newcommand{\conv}{\textup{\textsc{Convolution-3Sum}}}
\tikzset{
    state/.style={
           rectangle,
           rounded corners,
           draw=black, very thick,
           minimum height=2em,
           inner sep=2pt,
           text centered,
           },
}
\title{Matching Triangles and Triangle Collection: \\Hardness based on a Weak Quantum Conjecture}
\author{Andris Ambainis\thanks{Center for Quantum Computer Science, Faculty of Computing, University of Latvia, Latvia. {\tt andris.ambainis@lu.lv}}
\and
Harry Buhrman\thanks{QuSoft, CWI and University of Amsterdam. {\tt harry.buhrman@cwi.nl}}
\and 
Koen Leijnse\thanks{QuSoft, University of Amsterdam. {\tt koenleijnse@gmail.com}}
\and 
Subhasree Patro\thanks{QuSoft, CWI and University of Amsterdam.  {\tt Subhasree.Patro@cwi.nl}}
\and 
Florian Speelman\thanks{QuSoft, CWI and University of Amsterdam. {\tt f.speelman@uva.nl}}}
\date{\today}
\begin{document}

\maketitle

\begin{abstract}
Classically, for many computational problems one can conclude time lower bounds conditioned on the hardness of one or more of key problems, such as $\ksat$, $\tsum$ and $\apsp$. More recently, similar results have been derived in the quantum setting conditioned on the quantum hardness of $\ksat$ and $\tsum$
\cite{aaronson_quantum_2020,buhrman_limits_2021,buhrman_framework_2021}.
%the first by Aaronson, Chia, Lin, Wang, and Zhang, and by Buhrman, Patro, Speelman, and the second by Buhrman, Loff, Patro, Speelman.
This is done using fine-grained reductions, where the approach is to (1) select a key problem $X$ that for some function $T$, is conjectured to not be solvable by any $O(T(n)^{1-\epsilon})$ time algorithm for any constant $\epsilon > 0$ (in a fixed model of computation), and (2) reduce $X$ in a fine-grained way to these computational problems, thus giving (mostly) tight conditional time lower bounds for them. 

Interestingly, for $\matchingTriangles$ and $\triangleCollection$, two graph problems with natural definitions, classical hardness results have been derived~\cite{abboud_matching_2018} conditioned on hardness of all three key problems. More precisely, it is proven that an $O(n^{3-\epsilon})$ time classical algorithm (for any constant $\epsilon > 0$) for either of these two graph problems would imply faster classical algorithms for $\ksat$, $\tsum$ and $\apsp$, which makes $\matchingTriangles$ and $\triangleCollection$ worthwhile to study. 

In this paper, analogous to the classical result, we show that an $O(n^{1.5-\epsilon})$ time quantum algorithm (for any constant $\epsilon > 0$) for either of these two graph problems would imply faster quantum algorithms for $\ksat$, $\tsum$ and $\apsp$. To prove these results, we first formulate a quantum hardness conjecture for $\apsp$ analogous to the classical one and then present quantum reductions from $\ksat$, $\tsum$ and $\apsp$ to $\matchingTriangles$ and $\triangleCollection$. Additionally, based on the quantum $\apsp$ conjecture, we are also able to prove quantum lower bounds for a matrix problem and many graph problems. The matching upper bounds follow trivially for most of them, except for $\matchingTriangles$ and $\triangleCollection$ for which we present quantum algorithms that require careful use of data structures and Ambainis' variable time search \cite{ambainis_quantum_2006} as a subroutine.
\end{abstract}

\thispagestyle{empty}

\tableofcontents{}

\newpage
\clearpage
\setcounter{page}{1}

\section{Introduction}
\label{sec:Introduction}

Recent advancements in quantum hardware technologies have made it even more exciting to further the theoretical development of quantum algorithms, because of the possible speedups for computational problems as compared to their respective classical counterparts. Interestingly, for some naturally occurring problems, we can prove limits on how much quantum speedup is achievable. For example, it can be shown that it is not possible to get a super-quadratic quantum speedup for the \emph{unordered search} problem. Most of these results follow from quantum \emph{query} lower bounds that don't always immediately imply optimal \emph{time} lower bounds, especially for problems that require super-linear time. In general, for a powerful enough computational model, time lower bounds are notoriously hard to obtain. To overcome this barrier, several recent works have been developing the field of quantum fine-grained complexity, which has the power to prove conditional quantum time lower bounds for many computational problems and thereby elucidates the internal structure of the BQP complexity class.

The idea of fine-grained complexity is as follows:
Pick a well-studied problem $X$, that on $n$ input variables is conjectured to not be solvable in $O(T(n))^{1-\epsilon}$ time in a particular model of computation, for any constant $\epsilon > 0$, for some monotonically increasing function $T$.
Then, given an input instance of $X$, if we can reduce it to input instances of some problem $Y$ such that we can solve $X$ using any algorithm for $Y$, we have found a time lower bound for $Y$ based on the conjectured hardness of $X$ in the same model of computation.
Classically, some of the most well studied problems are $\sat$, $\tsum$ and $\apsp$, and, based on the hardness of these key problems classical time lower bounds for a lot of computational problems have been derived.\footnote{Also see the survey article by \cite{williams_fine-grained_2019} for a summary of these results and their corresponding reductions.}
Note that this logic gives us a double-edged sword, in a positive way:
If a speed-up is possible for any of the key problems (which happens if there's a speed-up for any problem they are reduced to) we are happy with having a new result about a fundamental problem.
On the other hand, if no speed-up is possible, then the web of fine-grained reductions imply interesting lower bounds for a swathe of natural problems.

In this paper, we further extend the field of fine-grained complexity by presenting similar results in the quantum setting. We give a quantum hardness conjecture for the All Pairs Shortest Path problem ($\apsp$), a classical key problem, and present structural fine-grained results that can be derived from this.
In particular, we show fine-grained quantum lower bounds for two graph problems conditional on several key problems at the same time (and a matching non-trivial upper bound), culminating in a quantum analogue of the result of Abboud, Williams, and Yu~\cite{abboud_matching_2018}.

\subsection{A Weak Quantum Conjecture}
\label{sec:PopularConjecture}

One of the first problems to be studied in this context of quantum fine-grained complexity is $\ksat$, the problem of whether a formula, input in conjunctive normal form, has a satisfying assignment. Classically, $\ksat$ is conjectured to require $2^n$ time, referred to as the Strong Exponential-Time Hypothesis (SETH). However, SETH fails to hold in the quantum setting because one can solve $\ksat$ in $O^*(2^{n/2})$ time. Recently, two independent works studied quantum variants of SETH, and used these variants to prove (often tight) bounds on how much quantum speed-up is possible for various problems. The quadratically faster quantum algorithm to solve $\ksat$ is the current best known algorithm, and no further significant improvement to the run-time is known. Consequently, this led to the following conjecture.

\begin{conjecture}[QSETH \cite{aaronson_quantum_2020}]
\label{conj:basicQSETH}
For every $\delta > 0$ there exists a $k \geq 3$ such that there is no bounded-error quantum algorithm that solves $\ksat$ on $n$ variables, $m$ clauses in $O(m^{O(1)}2^{\frac{n}{2}(1-\delta)})$ time.
\end{conjecture}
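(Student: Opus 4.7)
The final statement here is a hardness conjecture (QSETH) rather than a provable theorem: an unconditional derivation would require an $\Omega(2^{n/2(1-o(1))})$ quantum time lower bound for $\ksat$, which is well beyond the reach of current lower-bound techniques. The honest reading of the task is therefore to sketch how one might \emph{accumulate evidence} for the conjecture and to identify the obstructions to a genuine proof, rather than to propose a formal derivation.

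The natural starting point is the quantum query lower bound for unstructured search. Viewing $\ksat$ as search over $\ZO^n$, either the polynomial method or the adversary method yields $\Omega(2^{n/2})$ queries for any oracle-only algorithm. The first step I would take is to argue that no structural feature of the CNF representation can be exploited to go below this bound in the high-$k$ limit. Concretely, the plan is: first, examine known quantum speed-ups of classical $\ksat$ heuristics (quantum analogues of PPSZ, Sch\"oning-style random walks, Montanaro-type quantum backtracking) and verify that their running times tend to $2^{n/2}$ as $k \to \infty$; second, formulate an algorithmic-barrier statement for a restricted but natural class of quantum algorithms (amplitude amplification composed with polynomial-depth branching search) showing that none can beat $2^{n/2(1-\delta)}$ for every $\delta > 0$ at sufficiently large $k$. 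A complementary route is to derive QSETH from a weaker but more familiar hypothesis, for instance a fine-grained quantum hardness of $\cnfsat$ or an explicit circuit lower bound against BQP.

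The main obstacle is fundamental and shared with classical SETH: unconditional time lower bounds against general (quantum) algorithms for explicit problems are essentially unknown, and query lower bounds do not lift to time bounds when the algorithm is given a succinct description of the instance (the $m$ clauses) rather than oracle access to the truth table. This is precisely why QSETH, like SETH itself, is adopted here as a working hypothesis; the strategy of the present paper, in line with the broader fine-grained complexity program, is to \emph{use} QSETH (together with analogous conjectures for $\tsum$ and $\apsp$) inside reductions to derive conditional lower bounds for other problems, rather than to establish it.
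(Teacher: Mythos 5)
You correctly identify that the statement is a conjecture, not a theorem, and that the paper (following \cite{aaronson_quantum_2020}) adopts QSETH as an unproven hardness hypothesis rather than establishing it; the paper likewise offers no proof, only the motivating observation that Grover-style search gives $O^*(2^{n/2})$ and no significant improvement is known. Your discussion of the evidence (query lower bounds, limits of known quantum SAT algorithms) and of the obstruction (query bounds do not transfer to time bounds given a succinct instance description) is sound and consistent with how the paper treats the conjecture.
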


Aaronson, Chia,
Lin, Wang, and Zhang \cite{aaronson_quantum_2020} presented linear quantum time lower bounds for Closest Pair, Bichromatic Closest Pair, and Orthogonal Vectors, using QSETH. In the same paper, they also present matching quantum upper bounds for these problems. Simultaneously, Buhrman, Patro, and Speelman \cite{buhrman_framework_2021} presented a framework for proving quantum time lower bounds for many problems in $\P$ conditioned on quantum hardness of variants of $\sat$, which they used to prove an $n^{1.5}$ quantum time lower bound for the Edit Distance and the Longest Common Subsequence problems. 

This kind of quantum fine-grained results was next shown to include hardness results based on the quantum hardness of $\tsum$, a problem when given a list of $n$ integers, one needs to output if the list contains a triple $a,b,c$ such that $a+b+c=0$. While the classical complexity of $\tsum$ is $\Theta(n^2)$ with a conjectured lower bound, it turns out that there is a $\tO(n)$ time quantum algorithm to solve $\tsum$. However, because of lack of significant improvement to the run-time, the following was conjectured.

\begin{conjecture}[Quantum $\tsum$ conjecture]
\label{conj:Q3SUM}
There is no bounded-error quantum algorithm that solves $\tsum$ on a list of $n$ integers in $O(n^{1-\delta})$ time, for any constant $\delta >0$.
\end{conjecture}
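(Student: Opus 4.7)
The final statement is a conjecture rather than a theorem, so what follows is less a proof plan than a sketch of the evidence one would assemble in its support; producing an unconditional $\Omega(n^{1-o(1)})$ quantum time lower bound for a natural problem in $\mathsf{BQP}$ such as $\tsum$ would be a major breakthrough well beyond current techniques, since we have essentially no way to prove polynomial time lower bounds in a uniform quantum model.

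Two pillars would accompany the conjecture. First, I would establish the matching $\tO(n)$ \emph{upper} bound, so that the exponent $1$ is at least the right place to conjecture optimality. Sort the $n$ integers in $O(n \log n)$ classical preprocessing, then run Grover's algorithm over the $n^{2}$ pairs $(i,j) \in [n]\times[n]$, where the check for a pair consists of a binary search in the sorted array for the complement $-(a_i + a_j)$ and so costs $O(\log n)$. This yields a bounded-error algorithm in $\tO(\sqrt{n^{2}}) = \tO(n)$ time, matching the exponent appearing in the conjecture; any future improvement to the upper bound would force a corresponding restatement.

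Second, I would marshal lower-bound evidence in the quantum query model, where adversary-style arguments and reductions from Element Distinctness-like sub-instances show that $\tsum$ requires a number of queries that is polynomial in $n$ and, under natural encodings, approaches $\Omega(n)$. Since any time-$T$ algorithm uses at most $T$ queries, this pins the right-hand side of the conjecture from below in the query model. The main obstacle, and the reason the statement has to remain a conjecture, is precisely the gap between query lower bounds and time lower bounds in a quantum RAM-style model: an algorithm may perform arbitrary non-querying computation on preprocessed data, and there is no known technique that transfers a tight query bound into a tight time bound at this granularity. Short of such a breakthrough, the plan is to adopt the conjecture as a working hypothesis and derive fine-grained consequences from it, as the remainder of the paper does.
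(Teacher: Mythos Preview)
Your assessment is correct: this is a conjecture, not a theorem, and the paper does not attempt to prove it either. The paper simply motivates it by noting that a $\tO(n)$ quantum algorithm exists and that no significant improvement has been found, then adopts it as a hypothesis; your proposal does the same, with somewhat more detail on the upper bound and on why query lower bounds do not settle the time complexity.
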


Based on this conjecture (almost) tight quantum bounds were derived for $\conv$, $\zwt$ and a long list of computational geometry problems \cite{buhrman_limits_2021}.

In this paper, we further extend the field of fine-grained complexity by studying the quantum hardness of the All Pairs Shortest Path ($\apsp$) problem which is defined as follows: Given a weighted graph $G=(V, E)$ on $n$ nodes with no negative cycles, for every pair of vertices $(a,b) \in V \times V$, output the shortest distance between vertices $a$ and $b$ if there is a path, else output $\infty$.

The currently fastest known classical algorithm for $\apsp$ runs in $n^{3}/\exp({\sqrt{\log n}})$ time \cite{williams_faster_2018}, and it has been conjectured that no $O(n^{3-\epsilon})$ time classical algorithm, for any constant $\epsilon >0$, is possible. Based on this conjecture, time lower bounds for a lot of problems, for example Graph Radius, Graph Median, Negative Triangle and many more, have been concluded. However, the fastest known quantum algorithm to solve to solve $\apsp$ takes time $\tO(n^{2.5})$  \cite{durr_quantum_2006}, but no significant speedups to this algorithm have been found in a long time.\footnote{Also see Koen Leijnse's master's thesis for a discussion on this \cite{thesis_koen}.} Therefore, it is natural to study the consequences of the following conjecture. 

\begin{conjecture}[Quantum APSP Conjecture]
\label{conj:QAPSP}
There is no bounded-error quantum algorithm that solves $\apsp$ on a graph of $n$ nodes in $O(n^{2.5-\delta})$ time, for any constant $\delta >0$.
\end{conjecture}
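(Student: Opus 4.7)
The statement is a conjecture rather than a theorem, so the task is less a proof and more a justification for why $\Omega(n^{2.5})$ is the ``right'' threshold to conjecture. Unconditional quantum time lower bounds on $\apsp$ are out of reach by the usual barriers (relativization, algebrization, natural proofs), so my plan is to argue the conjecture's plausibility along the same lines used for the quantum $\tsum$ and QSETH conjectures: identify the best known upper bound, explain why the natural quantum approaches all get stuck at the same exponent, and record partial evidence in restricted models.

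First, I would anchor the exponent by revisiting the Dürr--Heiligman--H{\o}yer--Mhalla algorithm of $\tO(n^{2.5})$ for $\apsp$ in the adjacency-matrix model, and observe that this is precisely the bound one obtains by wrapping Grover-style minimum-finding around an $n^{3}$ classical routine, or by running a quantum analogue of Dijkstra from each source. I would then catalogue the known faster classical approaches (Seidel, Zwick, the $n^{3}/\exp(\sqrt{\log n})$ algorithm of Williams) and argue, informally, why none of them appear to combine with Grover-type speedups to break the $n^{2.5}$ barrier: the algebraic methods rely on fast matrix multiplication over $(\min,+)$-semiring, where no quantum speedup is known, while the combinatorial savings from Williams's algorithm disappear when one tries to simultaneously search in superposition. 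This mirrors the reasoning supporting Conjecture~\ref{conj:Q3SUM}, where the $\tO(n)$ bound has not been improved despite substantial effort.

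Next, I would assemble partial evidence in restricted models. In the quantum query model, one can hope to prove an $\tOmega(n^{2.5})$ lower bound for variants such as negative-triangle detection or min-plus matrix multiplication using the adversary method or the polynomial method; any such unconditional query lower bound would rule out an entire class of would-be speedups and strengthen the conjecture. I would also highlight that an $n^{2.5-\delta}$ time quantum algorithm would imply a sub-$n^{3-2\delta}$ classical algorithm via standard simulation overhead arguments in some models, tying the conjecture to the widely believed classical $\apsp$ conjecture.

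The main obstacle, of course, is that none of this constitutes a proof: we have no technique capable of establishing super-linear time lower bounds against BQP. Consequently the best one can do is what the other quantum fine-grained conjectures do---state the bound matching the current-best algorithm and derive the web of conditional consequences, leaving a refutation (a faster algorithm for $\apsp$) or a proof (new lower bound techniques for quantum time) as an open problem. My ``proof plan'' is therefore really a plan to justify the conjecture, and I would defer the actual lower bound to future work.
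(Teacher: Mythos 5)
You correctly recognize that this is a conjecture, not a theorem, and that the paper offers no proof---only the motivation that the best known quantum algorithm (Dürr--Heiligman--H\o yer--Mhalla) runs in $\tO(n^{2.5})$ time and that no faster quantum algorithm has been found despite substantial effort, mirroring the reasoning behind QSETH and the quantum $\tsum$ conjecture. Your justification takes essentially the same approach as the paper, merely elaborating on it with additional (speculative) supporting evidence.
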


We study the classical reductions from $\apsp$ to six problems and observe that almost all those reductions can be trivially adapted to the quantum setting, thus proving tight lower bounds for nearly all of these problems. The matching upper bounds for most of them can be derived using Grover-like speed-ups, except for $\matchingTriangles$ and $\triangleCollection$ for which we present quantum algorithms that require careful use of data structures and Ambainis' variable time search as a subroutine, which we discuss shortly after.

In the classical case, hardness results for $\matchingTriangles$ and $\triangleCollection$ can be derived conditioned on hardness of all the three key problems. More precisely, Abboud, Williams, and Yu have proven that an $O(n^{3-\epsilon})$ time classical algorithm (for any constant $\epsilon > 0$ and $\omega(1) \leq \Delta(n) \leq n^{o(1)}$) for either of these two graph problems would imply faster classical algorithms for $\ksat$, $\tsum$ and $\apsp$ \cite{abboud_matching_2018}, which makes $\matchingTriangles$ and $\triangleCollection$ worthwhile to study.
Which means, one can now make fewer hardness assumptions when it comes to understanding hardness of $\matchingTriangles$ and $\triangleCollection$.  Towards that, a weaker conjecture was introduced  which states that at least one of classical $\tsum$-conjecture, $\apsp$-conjecture or SETH is true. They called it the \emph{extremely popular conjecture}.

Analogous to Abboud et al.'s classical result, we are also able to show that an $O(n^{1.5-\epsilon})$ time quantum algorithm (for any constant $\epsilon > 0$ and the same ranges of $\Delta$) for either of these two graph problems would imply faster quantum algorithms for $\ksat$, $\tsum$ and $\apsp$. Clearly, for problems such as $\matchingTriangles$ and $\triangleCollection$ a weaker quantum hardness assumption can be made to conclude time lower bounds. Hence, we state the following conjecture analogous to the classical case. 

\begin{conjecture}
\label{conj:PopularConjecture}
At least one of Conjecture~\ref{conj:basicQSETH},~\ref{conj:Q3SUM}~or~\ref{conj:QAPSP} is true.
\end{conjecture}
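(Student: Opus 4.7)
My plan is to treat Conjecture~\ref{conj:PopularConjecture} not as a statement to be proved from first principles but as a logical weakening of the three preceding hypotheses, and to write a short motivating paragraph in place of a formal argument. The easy direction is tautological: any one of Conjectures~\ref{conj:basicQSETH},~\ref{conj:Q3SUM},~or~\ref{conj:QAPSP} already implies Conjecture~\ref{conj:PopularConjecture}, so refuting the latter is strictly harder than refuting the hardest of the three. The substantive content is therefore the meta-claim that \emph{all three} quantum lower bounds cannot simultaneously fail, and the job of the paragraph is to explain why that joint failure is implausible.

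My next step would be to argue informally that a simultaneous refutation is out of reach of current quantum algorithmic paradigms. I would point out that the three key problems have been attacked by essentially disjoint techniques: the $O^{*}(2^{n/2})$ barrier for $\ksat$ rests on Grover-style amplitude amplification; the $\widetilde O(n)$ algorithm for $\tsum$ belongs to the element-distinctness / quantum-walk family; and the $\widetilde O(n^{2.5})$ bound of D\"urr et al.\ for $\apsp$ combines quantum minimum finding with a layered Bellman--Ford-style dynamic program. A common algorithmic idea that beats all three at once would therefore have to be genuinely new rather than an incremental improvement along any one of these directions, and this is precisely the robustness that the disjunction is designed to exploit. I would also emphasize that no reduction is currently known to collapse any two of these three conjectures to one another in the quantum setting, so refutations have to be found independently.

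The main obstacle to promoting this from motivation to an actual theorem is, unsurprisingly, that unconditional super-linear quantum time lower bounds for any natural problem remain out of reach in every model of computation stronger than query complexity. For this reason I would close the discussion by framing Conjecture~\ref{conj:PopularConjecture} explicitly as a minimal-commitment hypothesis: it captures only enough quantum hardness to support the reductions to $\matchingTriangles$ and $\triangleCollection$ carried out later in the paper, while remaining agnostic about which of the three pillars actually holds, and this is exactly what one wants from an ``extremely popular'' quantum conjecture.
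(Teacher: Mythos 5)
You correctly recognize that this is a conjecture, not a theorem admitting a proof, and the paper likewise provides no proof: it simply states Conjecture~\ref{conj:PopularConjecture} as a deliberate logical weakening of Conjectures~\ref{conj:basicQSETH},~\ref{conj:Q3SUM}, and~\ref{conj:QAPSP}, mirroring the ``extremely popular conjecture'' of Abboud, Williams, and Yu. Your motivating discussion — that the disjunction is trivially implied by any one of the three, that refuting it requires simultaneous progress against three largely independent algorithmic barriers, and that it is the minimal hypothesis sufficient to support the subsequent reductions to \matchingTriangles{} and \triangleCollection{} — is consistent with and somewhat more detailed than the paper's own brief framing in Section~\ref{sec:PopularConjecture}.
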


 Based on Conjecture~\ref{conj:PopularConjecture}, we are able to prove tight quantum time bounds for $\matchingTriangles$ and $\triangleCollection$. The upper bounds are non-trivial and will be the content of the next sub-section.

\subsection{Quantum Upper Bounds for $\matchingTriangles$ and $\triangleCollection$}
The $\matchingTriangles$ and $\triangleCollection$ problems are two naturally occurring graph problems, whose definitions are as follows.

\begin{definition}[$\matchingTriangles$] Given a graph $G=(V,E)$ with a colouring of the vertices $\gamma: V \rightarrow \Gamma$ with $|\Gamma| \leq n$, determine if there is a triple of colours $i,j,k \in \Gamma$ such that there are at least $\Delta$ triangles $a,b,c \in V$ for which $(\gamma(a),\gamma(b),\gamma(c))=(i,j,k)$.
\end{definition}

Note that the range of $\Delta$ can vary between $0 \leq \Delta \leq n^3$.

\begin{definition}[$\triangleCollection$]Given a graph $G=(V,E)$ with a colouring of the vertices $\gamma: V \rightarrow \Gamma$ with $|\Gamma| \leq n$, determine if for every triple of colours $i,j,k \in \Gamma$ there is at least one triangle $a,b,c \in V$ for which $(\gamma(a),\gamma(b),\gamma(c))=(i,j,k)$.
\end{definition}

Quantum algorithms for many of the graph problems we consider in this work, such as $\nt$ or $\zwt$, can be found using an easy application of Grover's algorithm.
However, the algorithms for $\matchingTriangles$ and $\triangleCollection$ are less trivial -- these will heavily use \emph{Variable Time Search} by Ambainis~\cite{ambainis_quantum_2006} which can be informally stated as follows.

\begin{theorem}[Informal]
\label{thm:VariableTimeSearch}
Given a string $x \in \{0,1\}^n$, the task is to find $i$ such that $x_i=1$. Additionally, let $t_i$ be the maximum number of time steps required to evaluate $x_i$. Then the total time steps to find an $i$ such that $x_i=1$ is
\begin{equation*}
    O(\sqrt{t_1^2+t_2^2+ \ldots +t_n^2}).
\end{equation*}
\end{theorem}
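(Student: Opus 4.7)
The plan is to reduce variable-time search to nested amplitude amplification over inputs grouped by their evaluation cost. First, I would assume (up to rounding) that each $t_i$ is a power of two and partition the index set $\{1,\ldots,n\}$ into $O(\log(\max_i t_i))$ geometric buckets $S_k = \{i : 2^{k-1} < t_i \leq 2^k\}$. On each bucket $S_k$, standard Grover search using a query that aborts after $O(2^k)$ steps finds a marked index (if one exists in $S_k$) in time $T_k = O(2^k\sqrt{|S_k|})$, with bounded error after $O(1)$ repetitions. At this stage the algorithm already has the right per-bucket cost; the work is in stitching buckets together without losing a factor of $\log n$.

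Next, I would combine these bucket-level subroutines with a second layer of amplitude amplification, treating each bucket search as a single ``meta-query'' that decides whether $S_k$ contains a marked index, at cost $T_k$. Amplitude amplification over the disjunction of the $O(\log n)$ meta-queries then costs
\[
    O\!\left(\sqrt{\sum_k T_k^2}\right) \;=\; O\!\left(\sqrt{\sum_k 2^{2k}|S_k|}\right) \;=\; O\!\left(\sqrt{\sum_{i=1}^n t_i^2}\right),
\]
which matches the claimed bound (absorbing the polylogarithmic factors that the informal statement suppresses). The Cauchy--Schwarz-style identity in the middle equality is just the observation that each index $i \in S_k$ contributes $\Theta(t_i^2)$ to the bucket sum $2^{2k}|S_k|$.

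The main obstacle, and the real content of Ambainis' theorem, is that the inner bucket subroutines are not clean unitary queries: they split into superpositions of branches that halt at very different times, so naive plug-in into amplitude amplification would destroy the coherence required by the reflection operators. Addressing this requires what is now called \emph{variable-time amplitude amplification}, in which each branch is treated as a stopped quantum subroutine and one reflects carefully around the ``not-yet-halted'' subspace, while uncomputing aborted branches without disturbing the surviving amplitude. A careful budget of success-probability boosting at each bucket, together with a telescoped error analysis across the two nesting levels, is what prevents the errors from compounding. Once this bookkeeping is in place, the two-level sum-of-squares calculation above immediately delivers the bound on the total number of elementary time steps.
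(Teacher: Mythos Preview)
The paper does not prove this statement. Theorem~\ref{thm:VariableTimeSearch} is stated informally in the introduction and then restated more precisely as Theorem~\ref{TVGS} in Section~\ref{c5}, in both places simply citing Ambainis~\cite{ambainis_quantum_2006} as the source; it is used as a black box in the algorithms for $\tc$ and $\dmt$. There is therefore no proof in the paper to compare your proposal against.

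As a standalone sketch of Ambainis' result, your outline captures the right coarse structure (geometric bucketing by cost, Grover within a bucket) but the combination step is circular as written: asserting that amplitude amplification over the $O(\log n)$ meta-queries costs $O(\sqrt{\sum_k T_k^2})$ \emph{is} the variable-time bound you are trying to prove. If, as you say, polylogarithmic factors are being absorbed, then no second layer is needed at all: simply running the bucket searches sequentially costs $\sum_k T_k \le \sqrt{\log n}\cdot\sqrt{\sum_k T_k^2} = \tO(\sqrt{\sum_i t_i^2})$ by Cauchy--Schwarz, and you are done. The actual technical content of Ambainis' theorem is precisely the removal of that extra $\sqrt{\log n}$ factor via the careful ``stopped-branch'' amplification you allude to in your last paragraph; you correctly identify that this is where the work lies, but your sketch does not carry it out, so at best it establishes the bound up to polylog --- which, to be fair, is all the paper ever uses.
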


The algorithm for $\matchingTriangles$ is more complicated the one for $\triangleCollection$, with the similarity that this also makes use of the Variable Time Search of Theorem~\ref{thm:VariableTimeSearch}. Hence, we will sketch the algorithm for $\triangleCollection$ first and proceed to discuss the intuition of the algorithm for $\matchingTriangles$ after that. Also see \Cref{c5} for the details of these algorithms.

\paragraph{Algorithm for $\triangleCollection$} Recall the definition of the $\triangleCollection$ problem: Given a graph $G=(V,E)$, we want to know if for every triple of colours there is a triangle in the graph $G=(V,E)$. This is the same as knowing if there is a triple of colours such that there is \emph{no} triangle of that colour triple in $G$. Making use of this simple observation we do the following: Let's assume a subroutine that given a colour triple $(i,j,k) \in \Gamma^3$ outputs yes if there is a triangle of this colour in the input graph $G$, and no otherwise. Furthermore, let $t_{i,j,k}$ be the time taken for the subroutine on colour triple $(i,j,k)$. Invoking Theorem~\ref{thm:VariableTimeSearch} we can now conclude that the total time taken on a graph of $n$ nodes is $T(n)=O(\sqrt{\sum_{(i,j,k) \in \Gamma^3}t_{i,j,k}^2})$. Suppose that with some pre-processing of the input, this is where the data structures come to use, we could \emph{efficiently} access nodes of $G$ coloured by $i$ for any $i \in \Gamma$. Then in $t_{i,j,k}=\tO(\sqrt{|V_i| \cdot |V_j| \cdot |V_k|})$ time, where $V_i, V_j, V_k$ denotes the sets of nodes with colours $i,j,k$ respectively, we can find if there is a triangle in $G$ of colour $i,j,k$. Which means
\begin{equation*}
    T(n)=O(\sqrt{\sum_{i,j,k \in \Gamma^3}t_{i,j,k}^2})=\tO(\sqrt{\sum_{i,j,k \in \Gamma^3} |V_i|\cdot|V_j|\cdot|V_k|})=\tO(\sqrt{\sum_{i \in \Gamma} |V_i| \sum_{j \in \Gamma}|V_j|\sum_{k \in \Gamma}|V_k|})=\tO(n^{1.5}).
\end{equation*}

On the other hand, the algorithm for $\matchingTriangles$ is slightly more complicated. Apart from using the \emph{Variable Time Grover Search} subroutine, the algorithm makes use of different ways of counting number of triangles in a graph. To state some of these several methods:
\begin{enumerate}
    \item Given a graph $G=(V,E)$ of $n$ nodes and access to the adjacency matrix $A_G$ corresponding to $G$, one can count the number of triangles by computing the trace of $A_G^3$. Here $A_G^3$ refers to matrix multiplication of $A_G$ with itself three times. This can be achieved classically in $O(n^{\omega})$ time, where $\omega$ denotes matrix multiplication constant currently known to be at $2.3728$.
    \item Or, one could use the threshold variant of Grover Search, with which one can check if there are at least $k$ triangles in a graph of $n$ nodes in $O(n^{1.5+\frac{k}{2}})$ time.
\end{enumerate}

\paragraph{Intuition for Algorithm for $\matchingTriangles$} Recall the definition of $\matchingTriangles$: Given a graph $G=(V,E)$ with $n$ nodes and a colouring of the nodes $\gamma: V \rightarrow \Gamma$, is there a triple of colour $(i,j,k) \in \Gamma^3$ such that there are at least $\Delta$ triangles of that colour triple. Clearly, trivial brute forcing (in spite of the Grover like speedup) over all triples of colours and checking if any of the colour triple has at least $\Delta$ triangles is going to cost $O(\sqrt{\Delta n^3})$. While this bound is sub-cubic for small $\Delta$, however when $\Delta \approx n^3$ we get the same upper bound as the classical one. What helps is firstly the observation that when $\Delta$ is \emph{too big} then there aren't that many triples of colours that one needs to brute force over. Secondly, having fixed a triple of colour, it is more efficient to use brute force search over matrix multiplication if the number of nodes restricted to the triple of colour is small. Using these two simple but crucial observations, we get $\tO(n^{1.5}\sqrt{\Delta})$ for $1\leq \Delta(n) \leq n^{\omega}$ and $\tO(\frac{n^{1.5 + \omega}}{\sqrt{\Delta}})$ for $n^{\omega}\leq \Delta(n) \leq n^3 $. We present the details and the exact parameters of the algorithm in \Cref{c5}.

To summarise, in this paper we continue the recent work done in quantum fine-grained complexity: first by formulating a quantum hardness conjecture for $\apsp$, then by studying reductions from $\apsp$ to several computational problems. See \Cref{table} for an overview of the results. Furthermore, for $\matchingTriangles$ and $\triangleCollection$, we present quantum reductions from all the three key problems, i.e., $\ksat$, $\tsum$ and $\apsp$, from which we are able to derive conditional lower bounds for $\matchingTriangles$ and $\triangleCollection$ based on Conjecture~$\ref{conj:PopularConjecture}$, which (by construction) is a weaker conjecture than any of the earlier quantum hardness conjectures. We also provide non-trivial quantum upper bounds for the same. See \Cref{fig:flowchart} for the overview of how these problems computationally relate to each other.

\begin{table}[t]
\begin{tabular}{|l|l|l|l|}
\hline
                 Problem &   & Classical & Quantum \\ \hline &&& \\[-1em]
\multirowcell{2}{\textsc{$(\min,+)$-Matrix} \\ \textsc{Multiplication}} & Lower bound & $n^{3-o(1})$ \cite{fischer_boolean_1971,trans_closure} & $n^{2.5-o(1)}$ \Cref{QAPSPtoMM} \\ \cline{2-4}&&&\\[-1em] 
                  & Upper bound & $O(n^{3})$ $(\ast)$& $O(n^{2.5})$ ($\ast\ast$)\\ \hline &&&
\\[-1em]
\multirowcell{2}{\textsc{All-Pairs} \\  \textsc{Negative Triangle}} & Lower bound & $n^{3-o(1)}$ \cite{williams_subcubic_2018} & $n^{2.5-o(1)}$ \Cref{QMMtoAPNT} \\ \cline{2-4}&&& \\[-1em]
                  & Upper bound & $O(n^{3})$ $(\ast)$ & $O(n^{2.5})$ $(\ast\ast)$
                 \\ \hline&&&\\[-1em] 
\multirow{2}{*}{\textsc{Negative Triangle}} & Lower bound & $n^{3-o(1)}$  \cite{williams_subcubic_2018}& $n^{1.5-o(1)}$ \Cref{QAPNTtoNT} \\ \cline{2-4}&&& \\[-1em]
                  & Upper bound & $O(n^{3})$ $(\ast)$& $O(n^{1.5})$ $(\ast\ast)$\\ \hline&&&\\[-1em]
\multirow{2}{*}{\textsc{0-Weight Triangle}} & Lower bound& $n^{3-o(1)}$  \cite{vassilevska_finding_2009}& $n^{1.5-o(1)}$ \Cref{QNTto0WT}\\ \cline{2-4}&&& \\[-1em]
                  & Upper bound & $O(n^{3})$ $(\ast)$& $O(n^{1.5})$ $(\ast\ast)$ \\ \hline&&&\\[-1em]
\multirowcell{2}{\textsc{$\Delta$-Matching} \\ \textsc{Triangles}} & Lower bound & $n^{3-o(1)}$ \cite{abboud_matching_2018} $(\ast\ast\ast)$& $n^{1.5-o(1)}$ \Cref{Q0WTtoDMT} $(\ast\ast\ast)$\\ \cline{2-4}&&& \\[-1em]
                  & Upper bound & $O(n^{3-o(1)})$ \cite{abboud_matching_2018} $(\ast\ast\ast)$& $\tO(n^{1.5+o(1)})$  \Cref{qDMTsmallD} $(\ast\ast\ast)$\\ \hline&&&\\[-1em]
\multirowcell{2}{\textsc{Triangle} \\ \textsc{Collection}} & Lower bound & $n^{3-o(1)}$ \cite{abboud_matching_2018} & $n^{1.5-o(1)}$ \Cref{Q0WTtoTC} \\ \cline{2-4}&&&\\[-1em] 
                  & Upper bound & $O(n^{3})$ $(\ast)$& $\tO(n^{1.5})$ \Cref{QTCu}\\ \hline
\end{tabular}
\caption{Overview of lower bounds based on a hardness conjecture for $\apsp$, both in the classical and in the quantum setting. Corresponding upper bounds are also provided.
\newline
$(\ast)$: These upper bounds are the most straight forward algorithms, like exhaustive search, and therefore have no particular source. \newline
$(\ast\ast)$: By applying Grover Search, potentially as subroutine.
\newline
$(\ast\ast\ast)$: Holds only for $\omega(1) \leq \Delta \leq n^{o(1)}$.}
\label{table}
\end{table}

\subsection{Related Work}
Recently the field of quantum fine-grained complexity was also expanded to introduce a quantum analogue of the HSC, stating that the Hitting-Set (HS) problem has at most a quadratic speed-up in the quantum setting -- for this,  see Daan Schoneveld's Bachelor's Thesis~\cite{thesis_daan}. Using this conjecture, it was possible to prove a linear quantum lower bound for the  $(3/2 - \varepsilon)$-approximate radius problem. The same lower bound is also proven for the  $(3/2 - \varepsilon)$-approximate diameter problem, using the quantum version of SETH.

\subsection{Structure of the Paper}
We briefly state the computational model in \Cref{sec:Preliminaries}. In \Cref{c4}, we present quantum fine-grained reductions from three key problems, $\ksat, \tsum$ and $\apsp$, to several computational problems. Additionally, for $\matchingTriangles$ and $\triangleCollection$, we give reductions from all the three key problems. In \Cref{c5}, we give non-trivial quantum time upper bounds for $\matchingTriangles$ and $\triangleCollection$. Finally, we conclude in \Cref{c6}.

\section{Preliminaries}
\label{sec:Preliminaries}
We let $[n]=\{1, \ldots, n\}$. The $\tO(\cdot)$ notation hides poly-logarithmic factors in $n$.

\textbf{Quantum Computational Model.} For our algorithms, we consider the standard quantum circuit model of computation with the gate set containing all single qubit unitaries, $\CNOT$ and Random-Access Gates ($\RAG$s) which is defined as follows
\begin{equation}
\label{eq:RAG}
    \RAG \ket{i, b, x_0, \ldots, x_{n-1}} = \ket{i, x_i, x_0, \ldots, x_{i-1}, b, x_{i+1}, \ldots, x_{n-1}} \qquad \forall i\in[n], b, x_0, \ldots, x_{n-1} \in \ZO.
\end{equation}
The $\RAG$s are necessary to our algorithms and also in some reductions because without them simple operations that require $O(1)$ time would require $O(n)$ single and double qubit unitaries in the quantum setting. For further details refer to \cite{buhrman_memory_2022} where the model of computation is formally defined. 
In this computational model, we have can access the input (in superposition) using a bit oracle $O_x$, which is defined as \footnote{Given access to a bit oracle, we can convert it into a phase oracle, i.e., $O_{x,\pm}\ket{i} = (-1)^{x_i}\ket{i}$.} $O_x \ket{i,b} = \ket{i,b\oplus x_i}$.

Let $C_n$ denote the circuit corresponding to a quantum algorithm on input length $n$. We define time complexity of the algorithm (or equivalently of the circuit) to be the total number of gates in $C_n$. Single applications of $\RAG$ and $O_x$ are counted as one gate, respectively, even though in practice it is much costlier to implement these gates using single and two qubit unitaries.

\section{Quantum Fine-Grained Conditional Lower Bounds}
\label{c4}
In this section we will review quantum fine-grained reductions from, $\ksat$, $\tsum$ and $\apsp$ to $\dmt$ and $\tc$ to prove the hardness of these two triangle problems based on Conjecture~\ref{conj:PopularConjecture}, a quantum version of the hardness results based on extremely popular conjecture from \cite{abboud_matching_2018}. We also show a series of conditional quantum lower bounds for different computational problems in the process. 

\subsection{Quantum Fine-Grained Reductions from $\apsp$}
The quantum lower bounds that we show here use the same reductions that were used for proving the classical lower bounds in different works: \cite{fischer_boolean_1971,trans_closure,abboud_matching_2018,williams_subcubic_2018,vassilevska_finding_2009}. We will therefore not go into much detail here for proving the correctness of the reductions. Some adjustments are necessary however, for boosting the accuracy of our subroutines. We will therefore sketch the reductions briefly.\footnote{A more comprehensive overview of all the classical reductions can be found in \cite{thesis_koen}.}

The first couple of results are quite straight-forward, we can use the classical reductions with occasional on-the-fly methods, and we find many quantum lower bounds based on the quantum hardness of $\apsp$. 
First we retain the hardness-equivalence with~$\mm$ that was first discovered in \cite{fischer_boolean_1971,trans_closure}. These two reductions are somewhat considered folklore, for detailed descriptions we refer to \cite{thesis_koen}.

\begin{definition}[$\mm$]
Given two $n\times n$ matrices $M$, $N$ with entries from $[-n^c,n^c]$, compute the distance product $M \star N$ defined as: $(M \star N )[ij] := \min_{k \in [n]}(M[{ik}]+N[{kj}])$.
\end{definition}

\begin{theorem}
\label{QMMtoAPSP}
If $\mm$ over $n \times n$ matrices with entries from~$[-n^c,n^c]$ can be solved by an $O(n^{2.5-\epsilon})$ time algorithm for some $\epsilon > 0$, then $\apsp$ over weighted graphs with $n$ nodes and entries from $[-n^c,n^c]$ can be solved in~$O(\log (n) \log(\log(n)) n^{2.5-\epsilon})$ time.
\end{theorem}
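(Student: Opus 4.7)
The plan is to mimic the standard classical reduction from $\apsp$ to $\mm$ via repeated squaring of the $(\min,+)$-power of the weight matrix, while carefully boosting the success probability of each bounded-error quantum $\mm$ call so that the overall algorithm still succeeds with constant probability.

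First, form the weighted adjacency matrix $W$ of the input graph: set $W[ii]=0$, $W[ij]=w(i,j)$ whenever $(i,j)$ is an edge, and $W[ij]=+\infty$ (a value larger than $n\cdot n^{c}$ suffices) otherwise. The standard observation is that the $k$-fold $(\min,+)$-power $W^{\star k}$ records, for each pair of nodes, the shortest distance achievable via paths of at most $k$ edges. Because the graph has no negative cycles, every shortest path uses at most $n-1$ edges, so $W^{\star(n-1)}$ coincides with the $\apsp$ matrix. I would compute this by repeated squaring, producing $W^{\star 2}, W^{\star 4}, \ldots, W^{\star 2^{\lceil \log n\rceil}}$ using $\lceil \log n \rceil$ invocations of the $\mm$ subroutine. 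Entries of all intermediate matrices stay in $[-n^{c+1}, n^{c+1}]$ (paths have at most $n$ edges of weight at most $n^c$ in absolute value), so the same $\mm$ subroutine applies up to a constant shift in the parameter $c$, which does not affect the running time.

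The quantum twist is handling the bounded error of each subroutine call. Say the assumed quantum $\mm$ algorithm returns the correct product matrix with probability $\geq 2/3$. To guarantee correctness of the final output, I would boost the success probability of each of the $\lceil \log n \rceil$ sequential squarings to $1-\Theta(1/\log n)$, by running it $k=O(\log\log n)$ times independently and taking an entry-wise majority vote (well-defined because the correct product matrix is unique); a standard Chernoff bound shows $k=O(\log\log n)$ suffices. A union bound over the $\lceil \log n \rceil$ squarings then shows the entire chain succeeds with probability at least, say, $2/3$, yielding an $\apsp$ algorithm running in time
\[
O\bigl(\log n \cdot \log\log n \cdot n^{2.5-\epsilon}\bigr),
\]
as claimed.

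The only real obstacle — and it is essentially a technicality — is this boosting step: classically one runs a deterministic subroutine once per squaring, whereas in the quantum setting we must pay the extra $\log\log n$ factor to prevent failure probabilities from compounding across the $\log n$ sequential calls. All other aspects of the reduction (correctness of repeated squaring and the $n^{c+1}$ bound on intermediate entries) are inherited directly from the classical argument of \cite{fischer_boolean_1971,trans_closure}.
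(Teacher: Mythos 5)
Your proposal is correct and follows essentially the same approach as the paper: repeated squaring of the $(\min,+)$-power of the weight matrix using $O(\log n)$ calls to the $\mm$ subroutine, with each call boosted by $O(\log\log n)$ independent repetitions and majority vote to keep the compounded failure probability constant. The paper's proof is simply terser, deferring the repeated-squaring details to a reference, while you spell out the entry-range bound and the union-bound argument explicitly.
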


\begin{proof}
Let $G= (V,E)$ an $\apsp$ instance with weight matrix $W$ and with $n$ nodes. We apply the well-known (folkloric) repeated squaring method for computing the distance matrix described in detail in \cite{thesis_koen}. We use a $\mm$ subroutine to compute $W^{\star n}$ by making $\log(n)$ calls to the subroutine through repeated squaring. To boost the success probability of each of the $\log(n)$ subroutine calls we need to repeat them $O(\log(\log(n)))$ times.
\end{proof}

\begin{theorem}
\label{QAPSPtoMM}
If $\apsp$ for weighted graphs over $n$ nodes and with weights from $[-n^c,n^c]$ can be solved by an $O(n^{2.5-\epsilon})$ quantum algorithm for $\epsilon > 0$, then $\mm$ over $n \times n$ matrices and entries from~$[-n^c,n^c]$ can be solved in time $O(n^{2.5-\epsilon})$.
\end{theorem}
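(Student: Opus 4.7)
The plan is to use the standard folkloric tripartite-layering reduction and verify that it carries over essentially unchanged to the quantum setting. Given an instance $(M, N)$ of $\mm$ with entries in $[-n^c, n^c]$, I would construct a weighted graph $G = (V, E)$ on $3n$ nodes partitioned into three layers $A = \{a_1, \ldots, a_n\}$, $B = \{b_1, \ldots, b_n\}$, $C = \{c_1, \ldots, c_n\}$, with a directed edge $a_i \to b_k$ of weight $M[ik]$ for every $i, k \in [n]$ and a directed edge $b_k \to c_j$ of weight $N[kj]$ for every $k, j \in [n]$. (If the $\apsp$ formulation assumes undirected graphs, one can simulate direction by adding large offsets to the weights so that only the intended paths are short; alternatively one uses any standard gadget making back-edges prohibitively expensive while staying in $[-n^{c'}, n^{c'}]$ for some slightly larger constant $c'$.) By construction, the shortest $a_i \to c_j$ path must go through exactly one intermediate node $b_k$, and so its length equals $\min_{k \in [n]} (M[ik] + N[kj]) = (M \star N)[ij]$.

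Next, I would invoke the hypothesized quantum $\apsp$ algorithm on $G$. Since $G$ has $3n$ nodes and weights in a range of size polynomial in $n$, this runs in time $O((3n)^{2.5 - \epsilon}) = O(n^{2.5 - \epsilon})$. Reading off, for each $(i, j) \in [n]^2$, the computed distance from $a_i$ to $c_j$ then gives the full distance product $M \star N$.

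On the quantum side, the point to check is that the $\apsp$ subroutine only needs oracle access to the weight matrix of $G$, and that such an oracle can be implemented in $O(1)$ time per query from oracles for $M$ and $N$: on input an index pair in the tripartite layering, one case-splits on which pair of layers the endpoints lie in and either returns the appropriate entry of $M$, of $N$, or a ``no-edge'' sentinel. This requires only a constant number of arithmetic comparisons and a single $M$- or $N$-oracle call, so it does not blow up the running time.

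The main (very minor) obstacle is handling the constant blow-up in the number of vertices and ensuring the intended paths are the unique short ones when the $\apsp$ formulation is undirected or allows negative weights; both are resolved by standard weight-shifting gadgets that keep entries polynomially bounded, which is absorbed into the $O(n^{2.5 - \epsilon})$ bound. No quantum-specific difficulty arises because we invoke the $\apsp$ algorithm as a black box and the graph can be accessed implicitly through an $O(1)$-overhead oracle built from the input matrices.
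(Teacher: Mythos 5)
Your proposal is correct and follows essentially the same tripartite-layering construction as the paper's proof: layers $A$, $B$, $C$ with $M$-weighted edges from $A$ to $B$ and $N$-weighted edges from $B$ to $C$, so that $d(a_i, c_j) = (M \star N)[ij]$. The extra remarks you add about on-the-fly oracle access and directedness gadgets are sensible completions of details the paper leaves implicit, but they do not change the approach.
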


\begin{proof}
Let $M$ and $N$ be two $n\times n$ matrices. We will construct a graph in a way that we can read the product $M \star N$ from its distance matrix $D$. 

Let $G= (V,E)$ be the graph with vertex set $V = A \cup B \cup C$ such \mbox{that $A = \{a_1,,,a_n\}$}, $B= \{b_1,,,b_n\}$ and $C= \{c_1,,,c_n\}$. For every $M[ij]$ we add edge $(a_i,b_j)$ with weight $M[ij]$ to $E$ and for every $N[ij]$ we add edge~$(b_i, c_j)$ with weight $N[ij]$ to $E$. For a pair of nodes $a_i, c_j$ it holds that the distance of its shortest path is 
$\min_{b_k \in B} (w(a_i,b_k) + w(b_k, c_j)) = \min_{1\leq k \leq n} (M[ik] + N[kj]) = (M \star N)[ij]$.
\end{proof}
Again, we don't need any quantum operations for the above reduction to work. From \Cref{QMMtoAPSP} and \Cref{QAPSPtoMM} we now obtain the following:

\begin{corollary}
\label{QMMAPSPEQ}
For weighted graphs over $n$ nodes and with weights from~$[-n^c,n^c]$ there is a $\Tilde O(n^{2.5-\epsilon})$ time algorithm for $\apsp$ and $\epsilon > 0$ if and only if a $\Tilde O(n^{2.5-\epsilon})$ algorithm exists for $\mm$ over $n \times n$ matrices with entries from $[-n^c,n^c]$.
\end{corollary}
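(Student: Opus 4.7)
The plan is to derive the corollary by simply chaining Theorem~\ref{QMMtoAPSP} and Theorem~\ref{QAPSPtoMM}, observing that $\tilde O$ absorbs any polylogarithmic overhead introduced by the two reductions.

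For the ``only if'' direction I would assume an $\tilde O(n^{2.5-\epsilon})$ algorithm for $\mm$ on matrices with entries in $[-n^c,n^c]$ and feed it into the reduction of Theorem~\ref{QMMtoAPSP}. That reduction pays a multiplicative overhead of $\log(n)\log(\log(n))$: a factor of $\log n$ for the repeated-squaring rounds and an extra $\log\log n$ for boosting each bounded-error subroutine call so that the union bound over the $\log n$ calls still yields overall bounded error. Both of these factors are hidden inside $\tilde O$, so the resulting quantum algorithm for $\apsp$ on weighted graphs with entries in $[-n^c,n^c]$ runs in $\tilde O(n^{2.5-\epsilon})$ time.

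For the converse direction I would take an $\tilde O(n^{2.5-\epsilon})$ algorithm for $\apsp$ and apply the reduction of Theorem~\ref{QAPSPtoMM}. This reduction is a purely classical, gate-efficient construction of a tripartite graph $G = A \cup B \cup C$ on $3n$ vertices whose edges carry the entries of $M$ and $N$ as weights, so it adds no nontrivial overhead and preserves the weight range $[-n^c,n^c]$. Invoking the assumed $\apsp$ algorithm on $G$ and reading off the distance matrix gives $M \star N$ in time $\tilde O((3n)^{2.5-\epsilon}) = \tilde O(n^{2.5-\epsilon})$.

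There is no real obstacle here; the one point worth double-checking is that the weight regimes in the two theorems remain compatible. In the $\apsp \to \mm$ direction the edge weights are literal copies of matrix entries in $[-n^c,n^c]$. In the $\mm \to \apsp$ direction the shortest-path values appearing inside the repeated squaring can grow to roughly $n \cdot n^c = n^{c+1}$, but this is still polynomially bounded, so the bit-complexity cost is absorbed into $\tilde O$ and the hypothesis of Theorem~\ref{QMMtoAPSP} is satisfied with an adjusted constant $c' = c+1$. Combining the two implications then gives the claimed biconditional.
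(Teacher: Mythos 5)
Your proposal is correct and matches the paper's intended argument: the corollary is obtained by chaining Theorem~\ref{QMMtoAPSP} and Theorem~\ref{QAPSPtoMM} and letting $\tilde O$ absorb the $\log(n)\log\log(n)$ overhead and the factor-of-$3$ blow-up in the vertex count. One small slip: you have the labels reversed --- deriving an $\apsp$ algorithm from an $\mm$ algorithm (via Theorem~\ref{QMMtoAPSP}) is the \emph{if} direction of the stated biconditional, while deriving an $\mm$ algorithm from an $\apsp$ algorithm (via Theorem~\ref{QAPSPtoMM}) is the \emph{only if} direction; the substance of both halves is nevertheless correct, and your remark about the weight range growing to $n^{c+1}$ under repeated squaring is a valid point that is handled, as you say, by reading the statement as holding for every constant $c$.
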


From $\mm$ we follow the classical chain of reductions, first proving a quantum conditional lower bound for $\apnt$, as was classically shown in \cite{williams_subcubic_2018}:

\begin{definition}[$\apnt$]
Given a tripartite weighted graph $G= (A\cup B \cup C, E)$ over $O(n)$ nodes and with weights from~$[-n^c,n^c]$ determine for every pair of nodes $a, b$ such that $a \in A$ and $b \in B$ whether there exists $c \in C$ such that nodes $a,b,c$ form a triangle of negative weight in $G$.
\end{definition}

\begin{theorem}
\label{QMMtoAPNT}
If $\apnt$ over weighted graphs of $n$ nodes and with weights from $[-n^c,n^c]$ can be solved by an $O(n^{2.5-\epsilon})$ time quantum algorithm for $\epsilon > 0$, then~$\mm$ over $n \times n$ matrices with entries from~$[-n^c,n^c]$ can be solved in $O(\log(n)\log(\log(n))n^{2.5-\epsilon})$ time. 
\end{theorem}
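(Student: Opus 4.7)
The plan is to adapt the classical reduction from $\mm$ to $\apnt$ due to Vassilevska-Williams and Williams, which uses $\apnt$ as a subroutine together with binary search on the entry values. The quantum twist is only that the $\apnt$ subroutine now has bounded error, so we need an extra error-amplification factor, which should account for the $\log \log n$ term in the stated runtime.

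First I would set up the standard tripartite construction: given $M,N$, build a graph $G = (A \cup B \cup C, E)$ where $A,B,C$ are copies of $[n]$, with edges $(a_i,c_k)$ of weight $M[ik]$ and $(c_k,b_j)$ of weight $N[kj]$. For a chosen threshold $v_{ij}$, adding the edge $(a_i,b_j)$ with weight $-v_{ij}$ makes the triangle $(a_i,b_j,c_k)$ have total weight $M[ik] + N[kj] - v_{ij}$, so this triangle is negative for some $k$ iff $(M \star N)[ij] < v_{ij}$. Thus a single call to $\apnt$ on this augmented graph simultaneously tells us, for every pair $(i,j)$, whether $(M \star N)[ij]$ lies below its chosen threshold $v_{ij}$. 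Crucially, the thresholds $v_{ij}$ can be chosen independently for each pair by setting the $(a_i,b_j)$ edge weights independently, which is the key observation enabling parallel binary search.

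Next I would iterate binary search for $O(\log n)$ rounds (absorbing the constant $c$ from $[-n^c,n^c]$). In round $t$, every pair has a current interval $[L_{ij}^t, R_{ij}^t]$ for its value; we set $v_{ij} = \lceil (L_{ij}^t + R_{ij}^t)/2 \rceil$, run $\apnt$ on the resulting graph, and halve each interval based on the answer. Since the entries of $M \star N$ lie in $[-2n^c, 2n^c]$, $O(\log n)$ rounds suffice to pin down every entry exactly. The total number of $\apnt$ calls is therefore $O(\log n)$, each contributing $O(n^{2.5 - \epsilon})$ time.

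The main obstacle is error control: each quantum $\apnt$ call has only constant success probability, but we need all $O(\log n)$ rounds to be correct simultaneously, since one wrong answer could send the binary search down the wrong branch for that pair. I would resolve this by boosting each call's success probability to $1 - 1/(c' \log n)$ via $O(\log \log n)$ independent repetitions and a majority vote per pair, then applying a union bound over the $O(\log n)$ rounds to get overall bounded error. This yields the claimed $O(\log(n) \log(\log(n)) n^{2.5-\epsilon})$ runtime. No quantum-specific reduction techniques are needed beyond this amplification step, since the classical graph construction already works verbatim when the subroutine is quantum.
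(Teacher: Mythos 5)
Your proposal is correct and follows essentially the same route as the paper: the paper also invokes the Vassilevska Williams--Williams tripartite construction, performs $O(\log n)$ rounds of (parallel, per-pair) binary search via $\apnt$ calls, and boosts each round's subroutine $O(\log\log n)$ times before applying a union bound over rounds. You simply spell out explicitly the graph construction and the per-pair threshold mechanism that the paper leaves implicit by reference to \cite{williams_subcubic_2018}.
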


\begin{proof}
Given two $n \times n$ matrices $M$ and $N$ with entries from $[-n^c,n^c]$, we apply the reduction from \cite{williams_subcubic_2018}, constructing a graph of $O(n)$ nodes and perform a binary search over $[-n^c,n^c]$ to find the shortest paths using~$\log(n)$ applications of an $\apnt$ subroutine. To boost the individual $\apnt$ calls, we apply each one of them $\log(\log(n))$ times and take the output of each individual pair of nodes.  
\end{proof}

%We again don't need to run this reduction on a quantum computer, since we don't require on-the-fly methods or other quantum operations.

We find an interesting divergence in conditional lower bounds between our quantum and classical model in the reduction from $\apnt$ to $\nt$. 

\begin{definition}[$\nt$]
Given a weighted graph $G$ over $n$ nodes with weights from $[-n^c,n^c]$, determine if there is a triangle in $G$ with negative weight.
\end{definition}

\begin{theorem}
\label{QAPNTtoNT}
If $\nt$ can be solved in $O(n^{1.5-\epsilon})$ time by a quantum algorithm on weighted graphs over $n$ nodes with weights from $[-n^c,n^c]$ and for $\epsilon > 0$, then $\apnt$ on weighted graphs over $n$ nodes and with weights from $[-n^c,n^c]$ can be solved in $ O(\log(n)\log(\log(n))n^{2.5-\frac{\epsilon}{3}})$ time.
\end{theorem}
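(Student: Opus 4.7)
The plan is to adapt the classical block-decomposition reduction of Vassilevska and Williams to the quantum setting, mirroring the template of Theorem~\ref{QMMtoAPNT}. Given an $\apnt$ instance on the tripartite graph $G = (A \cup B \cup C, E)$ with $|A|=|B|=|C| = \Theta(n)$ and weights from $[-n^c, n^c]$, I would partition each of $A$, $B$, and $C$ into $n/s$ blocks of size $s = n^{1/3}$. For each of the $(n/s)^3 = n^2$ block triples $(A_i, B_j, C_k)$, invoke the quantum $\nt$ subroutine on the induced sub-graph (of size $3s$) to detect whether it contains a negative triangle. Since each call costs $O(s^{1.5-\epsilon}) = O(n^{(1.5-\epsilon)/3})$, the detection phase totals $O(n^2 \cdot n^{(1.5-\epsilon)/3}) = O(n^{2.5-\epsilon/3})$. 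The parameter choice $s=n^{1/3}$ is dictated by the need to match $n^2 \cdot s^{1.5-\epsilon} = n^{2.5-\epsilon/3}$.

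For every block triple that reports a negative triangle, extract witness pairs $(a,b) \in A_i \times B_j$ one at a time via the standard self-reduction: binary-search over $A_i \times B_j$ using $O(\log s) = O(\log n)$ further $\nt$ calls (on sub-block unions with $C_k$) to isolate one bad pair. Upon discovery, mark the pair by inflating the weight of the edge $(a,b)$ to $3n^c+1$; since this exceeds twice the maximum absolute weight, no triangle through $(a,b)$ can be negative, so subsequent $\nt$ calls automatically skip already-found pairs. Re-run $\nt$ on the block triple and repeat until no. Because each pair is inflated the moment it is found, it is discovered at most once across all block triples, bounding the total witness-finding iteration count by $|A \times B| \leq n^2$. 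The witness-finding phase therefore contributes at most $O(n^2 \log(n) \cdot n^{(1.5-\epsilon)/3}) = O(n^{2.5-\epsilon/3} \log n)$ to the run-time.

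Combining the two phases and applying the same $O(\log\log n)$-factor bounded-error amplification used in Theorem~\ref{QMMtoAPNT} to control failure probability across the invocations, the total run-time becomes $O(\log(n) \log(\log(n)) \cdot n^{2.5 - \epsilon/3})$ as claimed. The edge-weight inflation only widens the weight range by a constant factor, so the $[-n^c, n^c]$ hypothesis on $\nt$ can be preserved by enlarging $c$ by a constant, which we absorb; the inflated weights can be returned by the oracle on the fly without needing to rebuild the underlying data structure.

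The main obstacle I anticipate is guaranteeing that each bad pair is discovered only once: without the weight-inflation trick, a pair with multiple witnesses in different $C_k$'s would be rediscovered up to $n/s = n^{2/3}$ times, blowing up the witness-finding count well past $n^2$ and ruining the bound. The edge inflation resolves this cleanly and globally, since once the edge $(a,b)$ is non-negative and large enough, no $\nt$ call on any block triple containing $a$ and $b$ can detect $(a,b,\cdot)$ as a negative triangle. A secondary concern is error accumulation across the many subroutine calls, which is handled uniformly by the $\log\log n$ amplification already employed in the other reductions of this section.
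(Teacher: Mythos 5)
Your proof is correct and follows essentially the same route as the paper: both instantiate the Vassilevska--Williams block decomposition of $\apnt$ into $n^{1/3}$-sized sub-graphs (giving $n^2$ block triples), use self-reduction/binary search to extract bad $(a,b)$ pairs at a cost of $O(\log n)$ extra $\nt$ calls each, and attach a $\log\log n$ amplification factor to control error across the calls. Your write-up is more explicit about the edge-inflation trick that caps the total number of witness extractions at $n^2$, a detail the paper delegates to the cited classical reduction.
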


\begin{proof}
Given a weighted graph $G$ over $n$ nodes and with weights from $[-n^c,n^c]$, we apply the reduction from \cite{williams_subcubic_2018} to construct $n^{1-\alpha}$ graphs of size~$n^{\alpha}$ for $0\leq \alpha \leq 1$. If $\apnt$ can be solved in time $T(n)$, then we can solve $\nt$ in $O(T(n^{1-\alpha})(n^{3\alpha} + n^2))$ time. For $T(n) = n^{1.5-\epsilon}$, this is optimised at $\alpha=\frac23$. For each graph we need to make $\log(n)$ calls to the $\nt$ subroutine and to boost the success probability of each call, we apply each one of them $\log(\log(n))$ times. We can solve $\apnt$ in $O(\log(n)\log(\log(n))n^{2.5-\frac{\epsilon}{3}})$ time.
\end{proof}

Where in the classical setting we had cubic lower bounds for all problems from $\apsp$ to $\dmt$
and $\tc$, we find a $n^{1.5-o(1)}$ lower bound for $\nt$ for graphs with $n$ nodes. A lower bound of $n^{1.5-o(1)}$ does not imply that a better lower bound of e.g.~$n^{2.5-o(1)}$ does not exist for $\nt$. However, a simple Grover Search of all triples of nodes results in a matching upper bound, which \textit{does} imply that a larger lower bound would be inconsistent.

One could argue that it may come as more of a surprise that $\nt$ is as hard as $\apnt$ or $\apsp$ classically than that we find a gap in computational complexity in the quantum case. We look for a single triangle in $\nt$ and for $n^2$ potential triangles in $\apnt$. The quantum model highlights these gaps in difficulty of the problems in a way that the classical model could not. In this sense working in a quantum model on itself is already providing us with useful insights on the complexity of these problems, without requiring that we have a physical quantum algorithm to implement our reductions. An unfortunate consequence of this gap in lower bound complexities is that we lose the classical reduction from $\nt$ to $\mm$, since this reduction `as-is' does not let us go \textit{up} in complexity. In fact, the definition for fine-grained reductions makes it very challenging to prove computational lower bounds conditioned on smaller computational lower bounds, especially when the instances are similarly structured.

Less surprisingly, we find a similar lower bound of $n^{1.5-o(1)}$ for the $\zwt$ problem, using the reduction from \cite{vassilevska_finding_2009}. 

\begin{definition}[$\zwt$]
Given a weighted graph $G$ over $n$ nodes and with weights from $[-n^c,n^c]$, determine if there is a triangle in $G$ with weight 0.
\end{definition}

\begin{theorem}
\label{QNTto0WT}
If $\zwt$ on graphs over $n$ nodes and with weights from $[-n^c,n^c]$ can be solved in $O(n^{1.5-\epsilon})$ time by a quantum algorithm for some~$\epsilon > 0$, then $\nt$ on graphs of $n$ nodes and weights from~$[-n^c,n^c]$ can be solved in $O(\log(\log(n))n^{1.5-\epsilon})$ time.
\end{theorem}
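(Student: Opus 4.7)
The plan is to carry over the classical reduction from Vassilevska~\cite{vassilevska_finding_2009} essentially unchanged, supplementing it only with error amplification for the quantum subroutine. The classical reduction takes a $\nt$ instance on $n$ vertices with integer weights in $[-n^c, n^c]$ and produces a $\zwt$ instance on a graph of size $O(n)$ whose weight range is still polynomially bounded in $n$. Crucially, unlike the $\apnt \to \nt$ reduction in \Cref{QAPNTtoNT}, this reduction makes only $O(1)$ calls to the $\zwt$ oracle and therefore avoids a $\log n$ overhead for binary search over weights. The construction is purely combinatorial and requires no quantum-specific operations (no superposition access or oracle queries), so it lifts to our computational model without modification.

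Given the classical reduction, the remaining step is to handle the bounded-error nature of the assumed quantum $\zwt$ algorithm. Each invocation succeeds with only constant probability, so I amplify it to success probability $1 - 1/\mathrm{poly}(n)$ by running it $O(\log\log n)$ times and taking the majority vote. Since the underlying reduction uses only $O(1)$ subroutine calls, the total number of amplified calls is still $O(\log\log n)$, each costing $O(n^{1.5-\epsilon})$ time by assumption. Multiplying through yields the claimed overall running time of $O(\log\log n \cdot n^{1.5-\epsilon})$.

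The main step that needs care is verifying the two structural properties of the classical construction that make the quantum argument go through: (i)~that the transformed graph has $O(n)$ vertices so the assumed $O(n^{1.5-\epsilon})$ quantum upper bound still applies in terms of the original parameter~$n$; and (ii)~that the edge weights of the produced instance remain in a range of the form $[-n^{O(c)}, n^{O(c)}]$, so that the hypothesis covers the produced $\zwt$ instance. Both are straightforward to check from the construction in \cite{vassilevska_finding_2009}, after which correctness follows immediately from the classical analysis combined with the standard error-amplification bound.
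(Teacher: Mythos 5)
Your proposal diverges from the paper's argument in two places, and both are substantive.

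First, the count of $\zwt$ calls. You assert that the Vassilevska--Williams reduction from $\nt$ produces a \emph{single} $\zwt$ instance and hence makes only $O(1)$ oracle calls, and you contrast this with the $\log n$ binary-search overhead in \Cref{QAPNTtoNT}. The paper's proof instead constructs $\log(n^c)$ graphs $G_i$, each on $n$ nodes, and the reduction in \cite{vassilevska_finding_2009} does require $\Theta(\log M)$ zero-weight-triangle tests when the input weights lie in $[-M,M]$: roughly one instance per bit-position / prefix class into which the negative triangle sums are partitioned. Your $O(1)$-call claim is therefore unsupported and the conclusion you draw from it does not follow. A small internal tell: if the reduction really made $O(1)$ calls, you would not need $\Theta(\log\log n)$ repetitions at all --- constant amplification would already keep the overall error below a constant. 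The $\log\log n$ factor is exactly the amount of amplification needed to union-bound over $O(\poly\log n)$ subroutine invocations, which is consistent with $O(\log n)$ constructed graphs, not with $O(1)$. (Also, $O(\log\log n)$ repetitions bring the per-call error down to $1/\poly(\log n)$, not to $1/\poly(n)$ as you wrote.)

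Second, and more importantly, the reduction cannot be ``lifted without modification.'' The assumed $\zwt$ algorithm runs in time $O(n^{1.5-\epsilon})$, which is sublinear in the $n^2$-size adjacency-matrix input and thus necessarily accesses that matrix via superposition oracle queries. You therefore cannot afford to materialize even one transformed graph $G_i$ explicitly --- writing down its adjacency matrix already costs $\Omega(n^2)$ --- let alone $\log(n^c)$ of them. The reduction must be implemented \emph{on the fly}: you hand the $\zwt$ routine a quantum oracle that, on a query $(u,v)$, computes the $(u,v)$ entry of $G_i$'s weight matrix from $O(1)$ accesses to the original $\nt$ input and $O(1)$ arithmetic. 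Checking that the Vassilevska--Williams edge transformation is local and per-entry $O(1)$-computable is precisely the quantum-specific work that the paper's proof calls out, and it is essential to keeping the overall time within the sub-quadratic budget. Your statement that the construction ``requires no quantum-specific operations (no superposition access or oracle queries)'' misses this, and without it the claimed running time does not go through.
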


\begin{proof}
Given a weighted directed graph $G$ over $n$ nodes with weights from $[-n^c,n^c]$, we apply the reduction from \cite{vassilevska_finding_2009} to construct $\log(n^c)$ graphs $G_i$ over $n$ nodes. Since we work in the adjacency matrix model of graphs, am $\nt$ algorithm has to look over $n^2$ matrix entries to decide whether a negative triangle exists in one of the graphs. This reduction therefore needs to be on-the-fly. The entries of the adjacency matrix of a graph $G_i$ can be computed in $O(1)$ time. Since we need to boost the individual calls to the $\zwt$ subroutine, we can detect a negative triangle using $O(\log(\log(n)))$ calls to a $\zwt$ algorithm. 
\end{proof}

Just like in the classical case, we find no reduction from $\zwt$ up the chain of reductions towards $\apsp$ or $\tsum$. Since $\zwt$ is the meeting point for $\tsum$ and $\apsp$, having a reduction from $\zwt$ to $\apsp$ or $\tsum$ would imply a reduction between $\tsum$ and $\apsp$.
Note that for all problems for which we have proven quantum lower bound up to this point, we can easily match this lower bound by a simple application of Grover search.

The reduction from $\zwt$ to $\dmt$ from \cite{abboud_matching_2018} highlights some interesting aspects about the complexity of $\dmt$.

\begin{theorem}
\label{Q0WTtoDMT}
If $\dmt$ on coloured graphs of $n$ nodes can be solved in $O(n^{1.5-\epsilon})$ time by a quantum algorithm for some $\epsilon > 0$ and $\omega(1) \leq \Delta(n) \leq o(\log(n))$, then $\zwt$ on weighted graphs of $n$ nodes and with weights from $[-n^c,n^c]$ can be solved in $O(\log(n)\log\log(n)n^{1.5-\epsilon})$ time by a quantum algorithm.
\end{theorem}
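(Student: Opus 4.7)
The plan is to adapt the classical reduction of Abboud, Williams, and Yu from $\zwt$ to $\dmt$ so that it operates on-the-fly in concert with a quantum $\dmt$ subroutine. Given a weighted graph $G$ on $n$ nodes with weights drawn from $[-n^c, n^c]$, the classical construction produces $O(\log n)$ coloured instances of $\dmt$, each on $O(n)$ nodes, with the property that $G$ contains a $0$-weight triangle if and only if at least one of these instances has a colour triple supporting $\geq \Delta$ matching triangles. The $O(\log n)$ factor reflects either a binary search over the $O(\log n^c)$ bits of the weight range or the layered weight partition used in the classical argument; either way, there are only $O(\log n)$ distinct $\dmt$ queries to make.

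First I would spell out each reduced instance. For each choice in the binary search, build a coloured graph whose colouring encodes residue information about the weights and whose edges are induced from $G$ restricted to the allowed triples. The central gadget is a triangle-amplification construction that replaces each vertex by a constant number of copies so that a single $0$-weight triangle in $G$ gives rise to at least $\Delta$ triangles sharing a common colour triple in the constructed graph, while the absence of any $0$-weight triangle keeps every colour triple strictly below the $\Delta$ threshold. Because $\Delta = o(\log n)$, this blow-up adds only a sub-logarithmic multiplicative factor to the instance size and is absorbed in the $n^{1.5-\epsilon}$ bound of the subroutine. Each adjacency or colour query in the constructed instance is computable in $O(1)$ time from the adjacency oracle of $G$, so no explicit storage of the new graphs is required and the reduction is genuinely on-the-fly.

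Next I would account for the overall time. The quantum $\dmt$ algorithm runs in $O(n^{1.5-\epsilon})$ time per call but is bounded-error, so to use the results of $O(\log n)$ calls via a union bound I amplify each individual call's success probability to $1-1/\mathrm{poly}(\log n)$ by $O(\log\log n)$ repetitions and majority vote, just as in the proofs of Theorems~\ref{QMMtoAPSP}, \ref{QMMtoAPNT}, and \ref{QAPNTtoNT} above. Summing the cost over all calls yields the claimed $O(\log(n)\log\log(n)\, n^{1.5-\epsilon})$ quantum time.

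The main obstacle is verifying that the classical triangle-amplification gadget transports faithfully to the oracle setting: namely, that it can be implemented with $O(1)$-time adjacency and colour queries given only the input oracle for $G$, and that no spurious colour triple crosses the $\Delta$ threshold for graphs without a $0$-weight triangle. Both points follow from the explicitness of the Abboud–Williams–Yu construction together with the hypothesis $\Delta = \omega(1)$, which provides enough separation between the ``yes'' and ``no'' triangle-count regimes to make the gadget robust.
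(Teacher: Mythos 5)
Your high-level plan matches the paper's (invoke the Abboud--Williams--Yu reduction on the fly, call the hypothetical quantum $\dmt$ subroutine, and boost each call's success probability with $O(\log\log n)$ repetitions before a union bound), but your description of what the AWY reduction actually produces is wrong in ways that matter. You claim it yields $O(\log n)$ $\dmt$ instances, each on $O(n)$ nodes, via ``a binary search over the weight range or the layered weight partition,'' with a gadget that ``replaces each vertex by a constant number of copies.'' That is not the AWY construction. The reduction from $\zwt$ to $\dmt$ produces $2^{O(\Delta)}$ graphs, each on $O(\Delta n \cdot n^{c/\Delta})$ nodes, and the edges/colours are determined by hash-like functions that take $O(2^{\Delta})$ time to evaluate per query. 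The binary-search-over-weight-bits mechanism you describe is what happens in the $\mm \to \apnt$ step (Theorem~\ref{QMMtoAPNT}), not here; you appear to have transplanted that structure into the wrong reduction. The per-vertex blow-up is not constant: it is $\Delta \cdot n^{c/\Delta}$, which is $n^{o(1)}$ (not ``sub-logarithmic'') precisely because $\omega(1) \le \Delta \le o(\log n)$, and this same range is what makes $2^{\Delta} = n^{o(1)}$ so the on-the-fly edge evaluation stays sub-polynomial. Without those parameters stated correctly you cannot carry out the actual bookkeeping: the number of subroutine calls is $2^{O(\Delta)}$ rather than $O(\log n)$, and the instance size must be tracked through the exponent $1.5-\epsilon$ to see that the slack is absorbed.

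Concretely, to repair the argument you need to (i) state that the AWY reduction produces $2^{O(\Delta)}$ coloured graphs of $O(\Delta n \cdot n^{c/\Delta})$ nodes each, with adjacency and colour queries computable on the fly in $O(2^{\Delta})$ time from the oracle for $G$; (ii) observe that under $\omega(1) \le \Delta \le o(\log n)$, both $2^{O(\Delta)}$ and $n^{c/\Delta}$ and $2^{\Delta}$ are $n^{o(1)}$, so the total running time is $2^{O(\Delta)} \cdot 2^{\Delta} \cdot (\Delta n \cdot n^{c/\Delta})^{1.5-\epsilon} = n^{1.5-\epsilon+o(1)}$; and (iii) add the $O(\log\log n)$ amplification per call as you already do. The union-bound/amplification step and the ``on-the-fly'' idea in your write-up are fine; the missing piece is a correct account of the reduction's parameters and where the constraint on $\Delta$ is actually used.
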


\begin{proof}
Given a weighted directed graph $G$ over $n$ nodes and using the reduction from the proof of \cite{abboud_matching_2018} we can compute the adjacency matrices of $2^{O(\Delta)}$ graphs $G_i$ over $O(\Delta n\cdot n^{\frac c\Delta})$ nodes on-the-fly. We apply a $\dmt$ algorithm to each of these graphs using on-the-fly computation. Computing the edges of our graphs requires functions computed using the lemma from \cite{abboud_matching_2018}. The functions are computable in $O(2^\Delta)$ time. It follows that for $\omega(1) \leq \Delta \leq o(\log(n))$ we can use a $\dmt$ algorithm to solve $\zwt$. We boost the $\log(n)$ calls to the $\dmt$ subroutine and arrive at a complexity of $O(\log(n)\log\log(n)n^{1.5})$.
\end{proof}

Additionally, we can use the following classical result from \cite{abboud_matching_2018} to prove quantum hardness of $\dmt$ for a larger range of $\Delta$ as mentioned in \Cref{q0wttodmt_poly}.

\begin{theorem}[\cite{abboud_matching_2018}]\label{polydelta}
If $\dmt$ on graphs with $n$ nodes can be solved by a $O(n^{3-\epsilon})$ time algorithm for some $\epsilon > 0$ and $\omega(1) \leq \Delta \leq n^{o(1)}$, then $\Delta'$-MT can be solved in $\Tilde O(n^{3-\epsilon})$ time for $\omega(1) \leq\Delta' \leq o(\log(n))$.
\end{theorem}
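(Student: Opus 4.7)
The plan is to reduce $\Delta'$-MT, for any $\Delta'(n)$ with $\omega(1) \leq \Delta'(n) \leq o(\log n)$, to the fixed $\Delta$-MT problem we are given a fast algorithm for, using a vertex-blowup gadget. Given a colored graph $G = (V, E)$ on $n$ nodes with coloring $\gamma$, I would construct $G'$ on $n' = kn$ nodes by replacing each $v \in V$ with $k$ identically colored copies $v_1, \ldots, v_k$ and, for every edge $(u,v) \in E$, placing all $k^2$ edges $(u_i, v_j)$ into $G'$. Copies of the same vertex form an independent set, so triangles in $G'$ are in bijection with pairs $\bigl((u,v,w),(i,j,\ell)\bigr)$ where $(u,v,w)$ is a triangle of $G$ and $(i,j,\ell) \in [k]^3$. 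Hence every color triple $(a,b,c)$ has exactly $k^{3}\cdot c_{abc}$ triangles in $G'$, where $c_{abc}$ is its count in $G$, so $\Delta'$-MT on $G$ is equivalent to $(k^{3}\Delta'(n))$-MT on $G'$.

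The second step is to tune the blowup factor $k$ so that the shifted threshold $k^{3}\Delta'(n)$ equals the fixed algorithm's threshold $\Delta(n')$ on $G'$. Because $\Delta(m) \le m^{o(1)}$ is subpolynomial in $m$ while $k^{3}$ is polynomial in $k$, the equation $k^{3}\Delta'(n) = \Delta(kn)$ has a solution with $k = n^{o(1)}$: taking logarithms yields $3\log k + \log \Delta'(n) = \log \Delta(kn) = o(\log(kn))$, which forces $\log k = o(\log n)$. If no integer value of $k$ aligns the thresholds exactly due to the integrality of $\Delta(\cdot)$, I would pad $G$ with additional isolated vertices carrying a fresh color (they participate in no triangle and hence do not affect any color-triple count), providing independent control over $n'$ so that $\lceil \Delta(n')/k^{3}\rceil$ can be made to equal $\Delta'(n)$. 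The resulting $G'$ has $n' = n^{1+o(1)}$ vertices.

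Running the assumed algorithm on $G'$ then solves $\Delta'$-MT in time $O\bigl((n')^{3-\epsilon}\bigr) = O\bigl(n^{(3-\epsilon)(1+o(1))}\bigr) = \tO(n^{3-\epsilon})$, where the $\tO$ absorbs the $\Theta((n')^{2})$ cost of constructing $G'$, the lower-order exponent loss from the $(1+o(1))$ factor, and a $\log n$ factor covering a binary search to locate $k$ as well as standard boosting of the subroutine's success probability. The hypothesis $\Delta'(n) \le o(\log n)$ is used exactly to keep $k^{3}\Delta'(n) = n^{o(1)}$, so that the target value lies in the supported threshold range $[\omega(1),n^{o(1)}]$ of the fixed function $\Delta$. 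The main technical obstacle is this threshold-alignment step: for a specific fixed function $\Delta$ the integer value $\Delta(n')$ can jump over the integer target $k^{3}\Delta'(n)$ as $n'$ varies, and the padding trick is what provides the granularity to close the gap without inflating $n'$ past $n^{1+o(1)}$.
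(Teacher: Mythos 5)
The paper cites this theorem directly from \cite{abboud_matching_2018} without giving its own proof, so there is nothing internal to compare against; your vertex-blowup gadget and the observation that every color-triple count scales by exactly $k^3$ (copies of a vertex form an independent set, so no spurious triangles appear) do capture the standard self-reduction for this problem.

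There is, however, a genuine gap in the step where you assert that ``the equation $k^3\Delta'(n)=\Delta(kn)$ has a solution with $k=n^{o(1)}$.'' Taking logarithms only shows that \emph{if} a solution with $k\ge1$ exists, then $\log k=o(\log n)$; it does not establish existence. A solution exists (by the intermediate value theorem, up to the integrality you later address by padding) precisely when $\Delta(n)\ge\Delta'(n)$ eventually, so that $k^3\Delta'(n)\le\Delta(kn)$ at $k=1$ and the cubically growing left side can catch the subpolynomially growing right side as $k$ increases. The stated hypotheses do not rule out $\Delta(n)<\Delta'(n)$ (take $\Delta(n)=\log\log\log n$ and $\Delta'(n)=\log\log n$, both within their allowed ranges), and in that regime the blowup, which only ever \emph{raises} the effective threshold, cannot align anything: $k^3\Delta'(n)>\Delta(kn)$ already at $k=1$ and the gap widens with $k$. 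A clean write-up needs a case split, and the $\Delta<\Delta'$ case requires a different argument --- for how this theorem is actually invoked in the paper it is moot, since in that case $\Delta$ itself already lies in $[\omega(1),o(\log n)]$ and can be plugged into \Cref{Q0WTtoDMT} directly, but the theorem as stated quantifies over all $\Delta'\le o(\log n)$. Two smaller points: the padding argument tacitly assumes $\Delta(\cdot)$ never jumps by more than $k^3$ between consecutive arguments, a regularity property the hypotheses do not grant; and the resulting time is $n^{3-\epsilon+o(1)}$ rather than a strict $\tO(n^{3-\epsilon})$, because $n'/n=k=n^{o(1)}$ need not be polylogarithmic, though that looseness appears to be inherited from the theorem statement itself.
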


\begin{corollary}\label{q0wttodmt_poly}
If $\dmt$ on coloured graphs of $n$ nodes can be solved in $O(n^{1.5-\epsilon})$ time for some $\epsilon > 0$ and $\omega(1) \leq \Delta(n) \leq n^{o(1)}$, then $\zwt$ on weighted graphs of $n$ nodes and with weights from $[-n^c,n^c]$ can be solved in $\Tilde O(n^{1.5-\epsilon})$ time.
\end{corollary}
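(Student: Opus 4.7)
The plan is to chain two previously established reductions. The hypothesis gives a quantum $\dmt$ algorithm running in $O(n^{1.5-\epsilon})$ time for some $\Delta(n)$ in the range $\omega(1) \leq \Delta(n) \leq n^{o(1)}$, whereas Theorem~\ref{Q0WTtoDMT} requires a $\dmt$ algorithm for $\Delta'(n)$ in the strictly smaller range $\omega(1) \leq \Delta'(n) \leq o(\log n)$. To bridge this gap I will first invoke (the quantum lift of) Theorem~\ref{polydelta} to trade a large-$\Delta$ algorithm for a small-$\Delta'$ algorithm, and then feed the resulting subroutine into Theorem~\ref{Q0WTtoDMT}.

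In more detail, Theorem~\ref{polydelta} is proved in \cite{abboud_matching_2018} via a classical reduction that takes an instance of $\Delta'$-MT on $n$ nodes with $\Delta'(n) \leq o(\log n)$ and reduces it to $\tO(1)$ instances of $\Delta$-MT, each on $\tO(n)$ nodes, with only $\tO(1)$ non-oracle overhead. Because this is a black-box Turing reduction, it lifts to the quantum setting unchanged: replacing each classical call to the $\Delta$-MT subroutine by the hypothesized quantum algorithm (with $O(\log\log n)$ error-boosting repetitions so that a union bound over all calls succeeds with high probability) yields a bounded-error quantum algorithm for $\Delta'$-MT running in $\tO(n^{1.5-\epsilon})$ time for any $\Delta'(n)$ with $\omega(1) \leq \Delta'(n) \leq o(\log n)$.

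Equipped with this quantum $\Delta'$-MT algorithm, I then apply Theorem~\ref{Q0WTtoDMT} verbatim: it reduces a $\zwt$ instance on $n$ nodes with weights in $[-n^c,n^c]$ to $O(\log n)$ calls of a $\Delta'$-MT subroutine, each call being error-boosted $O(\log\log n)$ times. Composing the two reductions gives a total quantum running time of $\tO(n^{1.5-\epsilon})$ for $\zwt$, as claimed.

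The main obstacle is verifying that the classical reduction underlying Theorem~\ref{polydelta} is genuinely black-box in its use of the $\dmt$ subroutine and that every intermediate instance it constructs can be accessed on-the-fly through $\RAG$s and the input oracle at $O(1)$ (or polylogarithmic) cost per edge query; otherwise the quantum time bound could degrade by more than a $\mathrm{polylog}(n)$ factor. I expect this to go through because Theorem~\ref{Q0WTtoDMT} already performs exactly this kind of on-the-fly simulation for a very closely related construction from \cite{abboud_matching_2018}, so the same analysis (compute edges and colours of the reduced graphs via the explicit combinatorial formulas used there) applies here as well.
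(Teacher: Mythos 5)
Your proposal is correct and takes essentially the same route as the paper: the paper's proof of Corollary~\ref{q0wttodmt_poly} is exactly the two-step composition you describe, invoking Theorem~\ref{polydelta} to pass from the large-$\Delta$ regime to $\Delta' \leq o(\log n)$ and then applying Theorem~\ref{Q0WTtoDMT}. Your extra discussion of why the classical reduction behind Theorem~\ref{polydelta} lifts to the quantum setting (black-box, on-the-fly, polylog overhead) is the same implicit assumption the paper makes without spelling it out.
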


\begin{proof}
Follows from \Cref{Q0WTtoDMT} and \Cref{polydelta}.
\end{proof}

Here it is less clear whether the lower bound of $n^{1.5-o(1)}$ is the best lower bound we can find for $\dmt$. On an intuitive level, $\dmt$ definitely seems more complex than the single triangle finding problems of $\nt$ and $\zwt$. It is important to note that the above reduction only holds for the specified values of $\Delta$: $\omega(1) \leq \Delta(n) \leq n^{o(1)}$. We will see in Section~\ref{c5} that for these ranges of $\Delta$, there is indeed a matching upper bound. This leaves open the question of whether we can find a reduction from $\zwt$ to $\dmt$ for ranges of $\Delta$ that are polynomial or constant in the number of nodes in the graph. For polynomial values of $\Delta$ we are faced with the same challenge as in reducing $\nt$ to $\mm$ in the quantum case: we would be trying to increase the complexity of the lower bound through fine-grained reduction, going from $n^{1.5-o(1)}$ to potentially $n^{2.5-o(1)}$, depending on values of $\Delta$. In the next section we will see why $n^{2.5-o(1)}$ could be a reasonable quantum lower bound for $\dmt$ for unrestricted values of $\Delta$. 

On the other hand, the reduction from $\zwt$ to $\tc$ from \cite{abboud_matching_2018} makes use of the construction from the reduction from $\zwt$ to $\dmt$ for $\Delta$ values of $2^{O(\sqrt{\log n})}$, which is in the regime of $\Delta$ where we found an $n^{1.5-o(1)}$ lower bound for $\dmt$. As a consequence, we get a similar result for $\tc$.

\begin{theorem}
\label{Q0WTtoTC}
If $\tc$ can be solved in $O(n^{1.5-\epsilon})$ time by a quantum algorithm for some $\epsilon > 0$, then $\zwt$ can be solved in $O(\log(n)\log\log(n)n^{1.5-\epsilon})$ time by a quantum algorithm.
\end{theorem}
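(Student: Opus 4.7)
The plan is to mirror the structure of the proof of \Cref{Q0WTtoDMT}, since the authors remark that the classical reduction from $\zwt$ to $\tc$ of \cite{abboud_matching_2018} reuses the construction behind the $\zwt$-to-$\dmt$ reduction, specialised to $\Delta = 2^{O(\sqrt{\log n})}$. So, given a weighted directed graph $G$ on $n$ nodes with edge weights from $[-n^c, n^c]$, I would invoke the construction of \cite{abboud_matching_2018} to produce a family of $2^{O(\Delta)}$ coloured graphs $G_i$, each on $O(\Delta n \cdot n^{c/\Delta})$ nodes, such that $G$ contains a zero-weight triangle iff at least one $G_i$ fails the triangle collection property (i.e.\ there exists a colour triple with no matching triangle).

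Next, I would argue that this reduction can be carried out on-the-fly so that we are compatible with the adjacency-matrix/oracle model. Each adjacency-matrix entry of $G_i$, and each vertex colour, can be computed in time $O(2^\Delta)$ from $O(1)$ entries of the input adjacency matrix of $G$ together with the functions guaranteed by the lemma of \cite{abboud_matching_2018}. For our choice $\Delta = 2^{O(\sqrt{\log n})}$, this evaluation cost is $n^{o(1)}$, which is absorbed into the $\widetilde O(\cdot)$ hidden factor.

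Plugging this on-the-fly oracle into a hypothetical $O(m^{1.5-\epsilon})$-time quantum algorithm for $\tc$, where $m = O(\Delta n \cdot n^{c/\Delta}) = n^{1+o(1)}$ is the size of each $G_i$, solves one instance in time $n^{(1.5-\epsilon)(1+o(1))} = \widetilde O(n^{1.5-\epsilon})$. We run this subroutine on each of the $\log(n)$ (up to the $c$ constant, polylogarithmic) relevant graphs that arise from the binary-search / layering in the reduction; since each call is bounded-error, we amplify each by $O(\log\log n)$ independent repetitions and take the majority to ensure a correct overall answer by a union bound. The total running time is $O(\log(n)\log\log(n)\, n^{1.5-\epsilon})$, as claimed.

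The main obstacle I anticipate is just the on-the-fly bookkeeping: verifying that the map from $G$ to the colour-and-edge oracle of each $G_i$ is computable in the stated time per query, and that the number of graphs produced (namely $2^{O(\Delta)}$ times a $\log(n)$ binary search factor) contributes only polylogarithmic and $n^{o(1)}$ overhead once $\Delta = 2^{O(\sqrt{\log n})}$ is substituted in. As in \Cref{Q0WTtoDMT}, no genuinely quantum ingredient enters the reduction itself beyond the standard amplification of bounded-error subroutines; the speedup comes entirely from the assumed fast quantum algorithm for $\tc$.
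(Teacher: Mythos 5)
Your high-level plan (invoke the Abboud--Williams--Yu $\zwt\to\dmt$ construction, specialise to $\Delta = 2^{O(\sqrt{\log n})}$, compute the oracles on-the-fly, amplify each bounded-error call by $O(\log\log n)$) matches the paper's. But two of the explicit quantitative claims you make are false, and they are exactly the bookkeeping you flag as the ``main obstacle.'' First, you assert that the per-query cost $O(2^{\Delta})$ becomes $n^{o(1)}$ once $\Delta = 2^{O(\sqrt{\log n})}$ is substituted in. This is not so: since $\sqrt{\log n} = \omega(\log\log n)$, we have $\Delta = 2^{\Theta(\sqrt{\log n})} = \omega(\log n)$, and therefore $2^{\Delta} = n^{\omega(1)}$ is super-polynomial, not $n^{o(1)}$. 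Second, and for the same reason, you cannot afford to iterate the $\tc$ subroutine over the $2^{O(\Delta)}$ separate graphs $G_i$ as your sketch proposes, nor does ``$2^{O(\Delta)}$ times a $\log n$ binary-search factor'' reduce to polylogarithmic overhead; that count is likewise super-polynomial at this choice of $\Delta$.

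The source of both errors is that you have transported, essentially verbatim, the iteration structure and the $O(2^{\Delta})$ per-query figure from the $\dmt$ reduction (\Cref{Q0WTtoDMT}). That proof explicitly restricted to $\omega(1)\leq\Delta\leq o(\log n)$, precisely so that $2^{\Delta} = n^{o(1)}$; to go above $o(\log n)$ the paper passes through the black-box boosting of \Cref{polydelta}, not by plugging a larger $\Delta$ into the same arithmetic. The $\zwt\to\tc$ reduction of \cite{abboud_matching_2018} is built to sidestep that barrier: rather than producing $2^{O(\Delta)}$ separate $\dmt$ instances that must each be queried, the carry-pattern case split is folded into the vertex-colouring of the resulting $\tc$ instance, so the number of $\tc$ calls and the per-query oracle cost do not explode with $\Delta$. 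You should re-derive those two quantities from the actual $\tc$ construction rather than carrying over the $\dmt$ figures, since it is exactly the choice $\Delta = 2^{O(\sqrt{\log n})}$ (needed to keep the node count at $n\cdot n^{o(1)}$) that makes the arithmetic fail as you have written it.
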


\begin{proof}
Given a weighted directed graph $G$ over $n$ nodes, using the reduction from the proof of \cite{abboud_matching_2018} we can compute the adjacency matrices of $2^{O(\Delta)}$ graphs $G_i$ over $O(\Delta n\cdot n^{\frac c\Delta})$ nodes on-the-fly. We set $\Delta = 2^{\sqrt{\log(n)}}$ and use a $\tc$ algorithm to solve $\zwt$, boosting each of the individual $\tc$ calls.
\end{proof}

Here the question now really comes down to whether we can find a $O(n^{1.5})$ matching upper bound for $\tc$, which we do in the Section~\ref{c5}.

\subsection{Quantum Fine-Grained Reductions from $\cnfsat$ and $\ksat$}

Through the results from the previous section, and the quantum reductions from $\tsum$ to $\zwt$ from \cite{buhrman_limits_2021} we now have lower bounds on $\dmt$ and $\tc$ from the disjunction of hardness conjectures for $\apsp$ and $\tsum$. To complete the result, we need to verify that the reductions from $\ksat$, both taken from \cite{abboud_matching_2018}, still hold in the quantum case.  For $\ksat$ we go through the Sparsification Lemma from \cite{impagliazzo_complexity_2001}, arriving at the same lower bound we found through $\zwt$ for the same ranges of $\dmt$.

\begin{lemma}[Sparsification Lemma, \cite{impagliazzo_complexity_2001}]\label{sparse}
Let $\phi$ be a $k$-CNF formula over $n$ variables and with $m$ clauses for $k \geq 3$. For any $\epsilon > 0$ there is an $O(2^{\epsilon n})$ time algorithm that produces $O(2^{\epsilon n})$ $k$-CNF formulas $\phi_1,\dots,\phi_{O(2^{\epsilon n})}$ over $n$ variables and $c n$ clauses where $c = (\frac{k}{\epsilon})^{O(k)}$. It then holds that $\phi$ is in $\sat$ if and only if there is a satisfying assignment to $\bigvee_{i=1}^{O(2^{\epsilon n})}\phi_i$. 
\end{lemma}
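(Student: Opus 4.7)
The plan is to follow the original branching argument of Impagliazzo, Paturi and Zane, whose central combinatorial device is the sunflower lemma of Erdős and Ko: any family of more than $s!\,(t{-}1)^{s}$ sets of size at most $s$ contains a sunflower with $t$ petals sharing a common heart $H$ of size strictly less than $s$. I would run a branching procedure on $\phi$, keeping at each node a current $k$-CNF formula together with a partial assignment (which is fed back into the formula, shortening or satisfying clauses). Processing widths $i$ from $k$ down to $2$, at each node I search for a sunflower of threshold size $t_i$ among the width-$i$ clauses. If no such sunflower exists, then by the sunflower lemma the number of width-$i$ clauses is already bounded by $s!\,(t_i{-}1)^{s}$, which for the chosen thresholds will be $O(n)$, and I move on to width $i-1$.

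Whenever a sunflower $C_1,\ldots,C_{t_i}$ with heart $H$ \emph{is} found, I fork into two sub-instances and recurse on each:
\begin{enumerate}
\item[(A)] Add $H$ itself as a new clause (of width ${\le} i{-}1$) and delete $C_1,\ldots,C_{t_i}$; any assignment satisfying $H$ satisfies all the deleted clauses.
\item[(B)] Fix every literal in $H$ to false, which both assigns $|H|$ variables and shortens each $C_j$ to its petal $C_j\setminus H$ of width ${\le}i{-}1$.
\end{enumerate}
Observe that in both children the "width profile" strictly drops at level $i$ (either $t_i$ clauses of width $i$ vanish, or they become clauses of smaller width), so the recursion terminates. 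The leaves of the tree are exactly the output formulas $\phi_1,\ldots,\phi_N$, and by construction $\phi$ is satisfiable iff some $\phi_\ell$ is: in branch (A) the deleted clauses are implied by the added clause $H$, and branch (B) is an explicit restriction. The algorithm that finds a sunflower greedily can be implemented in time polynomial in the current formula size, so the total running time is linear in the tree's size.

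The quantitative analysis is the main obstacle and is what determines the parameter $c=(k/\varepsilon)^{O(k)}$. I would introduce a potential function $\Phi$ that is a weighted sum $\sum_{i=2}^{k} w_i\cdot(\text{\#clauses of width }i)$, with weights $w_i$ increasing in $i$; each sunflower branching drops $\Phi$ by a definite amount (in (A) many wide clauses vanish; in (B) $t_i$ clauses move from width $i$ to width ${\le}i{-}1$), so the recursion depth is bounded. The thresholds $t_i$ must be chosen small enough that the leaf formulas contain $O(n)$ clauses of each width (so total $cn$ with $c=\sum_i s!\,(t_i{-}1)^{s}=(k/\varepsilon)^{O(k)}$), yet large enough that the branching factor telescopes to at most $2^{\varepsilon n}$ leaves overall. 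The tight balancing of these two demands, and in particular verifying that branch (B)'s variable-fixings amortise against the log of the branching number to keep the leaf count at $2^{\varepsilon n}$, is the delicate calculation; the rest of the proof follows the recipe above with only bookkeeping.
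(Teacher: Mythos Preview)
The paper does not prove this lemma at all: it is stated with a citation to Impagliazzo, Paturi and Zane and then used as a black box in the reduction from $\ksat$ to $\dmt$ (\Cref{QkSATtoDMT}). So there is no ``paper's own proof'' to compare against.

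Your proposal is a faithful sketch of the standard sunflower-based presentation of the IPZ argument, which is the canonical proof. One minor quibble: when no sunflower of threshold size $t_i$ exists among the width-$i$ clauses, the Erd\H{o}s--Ko bound gives at most $i!\,(t_i{-}1)^{i}$ such clauses, a quantity depending only on $k$ and $\varepsilon$, not on $n$; so the leaf formulas actually have a \emph{constant} number of clauses of each width (hence $O(k\cdot(k/\varepsilon)^{O(k)})$ clauses total), which is of course still $\le cn$. Your statement that this is ``$O(n)$'' is not wrong, just weaker than what the argument gives. The genuinely delicate part is, as you say, choosing the thresholds $t_2,\ldots,t_k$ so that the total number of leaves stays below $2^{\varepsilon n}$; this is where the $(k/\varepsilon)^{O(k)}$ dependence enters, and your potential-function outline is the right shape for that bookkeeping.
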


\begin{theorem}
\label{QkSATtoDMT}
If $\dmt$ on a graph of $N$ nodes can be solved in $O(N^{1.5-\epsilon})$ time by a quantum algorithm for some $\epsilon > 0$ and $\omega(1) \leq \Delta(N) \leq N^{O(1)}$, then $\ksat$ on $n$ input variables can be solved in $\Tilde O(2^{\frac n2 (1-\epsilon')})$ time for some $0<\epsilon'<\frac23 \epsilon$.
\end{theorem}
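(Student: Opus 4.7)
The plan is to compose the Sparsification Lemma (\Cref{sparse}) with the classical Abboud--Williams--Yu reduction from \cite{abboud_matching_2018} that embeds a $\ksat$-instance into a $\dmt$-instance, and then to invoke the hypothesized $O(N^{1.5-\epsilon})$-time quantum $\dmt$ algorithm on each produced instance on-the-fly, in the same spirit as the other $\ksat$-based reductions in this paper.

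First I would fix a small parameter $\epsilon_0 > 0$ (to be optimised at the end) and apply \Cref{sparse} to the input $k$-CNF formula $\phi$, producing $M = O(2^{\epsilon_0 n})$ sparse $k$-CNF formulas $\phi_1,\ldots,\phi_M$ on the same $n$ variables with at most $cn$ clauses each, where $c = (k/\epsilon_0)^{O(k)}$, and $\phi \in \sat$ iff some $\phi_i \in \sat$. Next, for each $\phi_i$ I would apply the classical reduction of \cite{abboud_matching_2018}: split the variables into three blocks of $n/3$, create one vertex per partial assignment per block (so $N = O(2^{n/3})$), and put edges and colours between vertices so that the triples of colours with at least $\Delta$ triangles correspond exactly to satisfying assignments of $\phi_i$, for an appropriate $\Delta$ that lies in the allowed range $\omega(1) \le \Delta \le N^{O(1)}$. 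Since each adjacency query and each colour query for the resulting graph depends only on $\mathrm{poly}(n)$ bits of~$\phi_i$ and can be computed in $\mathrm{poly}(n)$ time, the graph is accessible on-the-fly, so the quantum $\dmt$ algorithm can be run without first materialising the $N \times N$ adjacency matrix.

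Finally I would run the assumed $O(N^{1.5-\epsilon})$-time $\dmt$ subroutine on each of the $M$ on-the-fly graphs, boosting each call's success probability to $1 - 1/\mathrm{poly}(M)$ by $O(\log M)$ repetitions, and accept iff at least one call accepts. Iterating over $i \in [M]$ (or, if desired, Grover-searching over $i$) gives a total running time of
\begin{equation*}
\tO\!\left( 2^{\epsilon_0 n} \cdot N^{1.5-\epsilon} \right) \;=\; \tO\!\left( 2^{(\epsilon_0 + \frac{1}{2} - \frac{\epsilon}{3})\,n} \right).
\end{equation*}
For any target $\epsilon' < \tfrac{2}{3}\epsilon$, the quantity $\tfrac{\epsilon}{3} - \tfrac{\epsilon'}{2}$ is strictly positive, so choosing $\epsilon_0 < \tfrac{\epsilon}{3} - \tfrac{\epsilon'}{2}$ yields a running time of $\tO\!\left(2^{\frac{n}{2}(1-\epsilon')}\right)$, as required.

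The main obstacle I expect is checking that the Abboud--Williams--Yu construction carries over cleanly to the quantum setting: namely that (i) every adjacency and colour of the $\dmt$-graph is genuinely computable in $\mathrm{poly}(n)$ time from $\phi_i$ (so that the quantum $\dmt$ subroutine only needs oracle access, not an explicit $N^2$-size matrix, which would already blow past our $\tO(2^{n/2})$ budget), and (ii) the value of $\Delta$ produced by the construction falls inside the permitted range $\omega(1) \le \Delta \le N^{O(1)}$ of the hypothesis. Both should be read off directly from the explicit combinatorial description in \cite{abboud_matching_2018}, but they are precisely the places where the classical-to-quantum translation could go wrong and therefore deserve an explicit verification in the full proof.
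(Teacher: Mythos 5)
Your proposal is correct and takes essentially the same route as the paper's own proof: sparsify via \Cref{sparse}, embed each sparse formula into a $\dmt$-instance using the Abboud--Williams--Yu construction accessed on-the-fly, run the hypothesized $O(N^{1.5-\epsilon})$-time $\dmt$ algorithm (boosted for success probability), and optimise the sparsification parameter to squeeze the savings below $\tfrac{2}{3}\epsilon$. The one place you gloss over is the node count of the AWY graph --- it is $N = O(\Delta\, 2^{n/3 + cn/(3\Delta)})$ rather than simply $O(2^{n/3})$, but since $\Delta = \omega(1)$ and $\Delta \le N^{o(1)}$ these extra factors contribute only $o(n)$ to the exponent, so your final arithmetic and the resulting range $\epsilon' < \tfrac{2}{3}\epsilon$ match the paper's.
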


\begin{proof}
Suppose we have a $O(N^{\frac 32-\epsilon})$ time $\dmt$ algorithm for graphs with $N$ nodes and $\omega(1) \leq \Delta(N) \leq N^{o(1)}$ and let $\phi$ be a $k$-CNF instance over $n$ variables and $m$ clauses. We apply the sparsification lemma, \Cref{sparse}, to compute $2^{\epsilon' n}$ $k$-CNF formulas $\phi_i$ with $cn$ clauses for some constant value $c$ in $2^{\epsilon' n}$ time. We then apply the reduction from \cite{abboud_matching_2018} to decide whether $\phi_i$ is satisfiable for each $i \in [2^{\epsilon' n}]$ in $O((\Delta 2^{\frac n3 + \frac{nc}{3\Delta}})^{\frac32 -\epsilon})$ time. Since $\omega(1) \leq \Delta(N) \leq N^{o(1)}$ it follows that $O((\Delta 2^{\frac n3 + \frac{nc}{3\Delta}})^{\frac32 -\epsilon}) = O(2^{n(\frac12 - \frac\epsilon3) + o(1)})$. The total time to evaluate all sparse formulas will be $O(2^{n(\frac12 - \frac\epsilon3 + \epsilon') + o(1)}) = O(2^{\frac n2(1 - \frac{2}{3}\epsilon + 2\epsilon') +o(1) })$. The theorem statement follows for $0<\epsilon' < \frac\epsilon6$. For these ranges of $\epsilon'$, the number of sparse formulas and the time required to compute them will not exceed the time necessary to solve $\ksat$ faster than conjectured. 
\end{proof}

For any $\epsilon' < \frac12$, the sparsification lemma can easily be applied for quantum reductions from $\ksat$. Furthermore, we don't need to explicitly construct all graphs for our sparse formulas as we did in the proof above. Instead, we can Grover search the set of sparse formulas, applying our $\dmt$ reduction on-the-fly. In this case, the total run time can be improved to $O(2^{\frac n2 (1 -\frac{2\epsilon}{3} + \epsilon')})$, allowing for a wider range of $\epsilon' < \frac\epsilon3$. The reduction from $\sat$ to $\tc$ gives us the same lower bound as we found from $\zwt$. 

\begin{theorem}
\label{QSATtoTC}
If $\tc$ on a graph of $N$ nodes can be solved in $O(N^{\frac32-\epsilon})$ time by a quantum algorithm for some $\epsilon > 0$, then $\sat$ on $n$ input variables can be solved in $\Tilde O(2^{\frac n2 (1-\frac\epsilon3)})$ time.
\end{theorem}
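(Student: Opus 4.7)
The plan is to closely mirror \Cref{QkSATtoDMT}, substituting the $\sat$-to-$\tc$ reduction of \cite{abboud_matching_2018} for the $\sat$-to-$\dmt$ one. As highlighted in the discussion following \Cref{QkSATtoDMT}, iterating through the sparse formulas can be replaced with a quantum Grover search, which sharpens the accessible exponent and yields exactly the target rate $2^{\frac{n}{2}(1 - \epsilon/3)}$ here.

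First, given an input $\sat$ formula $\phi$ on $n$ variables, I would invoke the sparsification lemma (\Cref{sparse}) with a parameter $\epsilon'$ approaching $\epsilon/3$ from below, producing $M = O(2^{\epsilon' n})$ sparse $k$-CNF formulas $\phi_1, \ldots, \phi_M$ with $cn$ clauses each, such that $\phi$ is satisfiable iff some $\phi_i$ is. I would then compose this with the Abboud-Williams-Yu reduction, exactly as used in \Cref{Q0WTtoTC}, sending each $\phi_i$ to a coloured graph $G_i$ on $N = \tO(2^{n/3})$ nodes whose $\tc$ answer witnesses satisfiability of $\phi_i$. Crucially, the adjacency bits of $G_i$ are computable in $\mathrm{poly}(n)$ time from $\phi$, $i$, and the labels of the two endpoint vertices, so the graph is accessed on-the-fly via a bit oracle rather than stored explicitly, exploiting the $\RAG$-based model of \Cref{sec:Preliminaries}.

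Next, I would Grover-search over the index $i \in [M]$ using, as the marking predicate, the assumed $O(N^{3/2 - \epsilon})$ quantum $\tc$ algorithm applied to $G_i$, each call boosted to error $1/\mathrm{poly}(n)$ by $O(\log \log n)$-fold majority voting. The resulting total running time is
\begin{equation*}
    \tO\bigl(\sqrt{M} \cdot N^{3/2 - \epsilon}\bigr) \;=\; \tO\bigl(2^{\epsilon' n/2} \cdot 2^{n(1/2 - \epsilon/3)}\bigr) \;=\; \tO\bigl(2^{\frac{n}{2}(1 - 2\epsilon/3 + \epsilon')}\bigr),
\end{equation*}
which, as $\epsilon' \to \epsilon/3$, becomes $\tO(2^{\frac{n}{2}(1 - \epsilon/3)})$, as claimed.

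The main obstacle, as in \Cref{QkSATtoDMT}, is essentially bookkeeping: verifying that the Abboud-Williams-Yu construction admits a $\mathrm{poly}(n)$-time bit-oracle implementation inside the quantum Grover search, and that the boosted $\tc$ subroutine can be used as the marking oracle without incurring more than the claimed $\sqrt{M}$ overhead. Both are routine given the tools already deployed in \Cref{QkSATtoDMT} and \Cref{Q0WTtoTC}, so no genuinely new ingredient beyond the chained sparsification-plus-Grover strategy is required.
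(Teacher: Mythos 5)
Your proposal invokes the sparsification lemma (\Cref{sparse}) on the input $\sat$ formula, but that lemma applies only to $k$-CNF formulas for fixed $k \geq 3$: the constant $c = (k/\epsilon)^{O(k)}$ blows up when the clause width is unbounded, which it is for general CNF-SAT. \Cref{QSATtoTC} is explicitly a statement about $\sat$, not $\ksat$, so the first step of your argument is not available. This is a genuine gap, not bookkeeping.

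It is also unnecessary. The paper's proof is a one-shot reduction: apply the Abboud--Williams--Yu $\sat$-to-$\tc$ construction directly to the CNF formula $\phi$ on $n$ variables and $m$ clauses, obtaining a \emph{single} coloured graph on $N = O(2^{n/3} m)$ nodes, and then run the assumed $O(N^{3/2-\epsilon})$ quantum $\tc$ algorithm on that graph. The running time is
\begin{equation*}
  O\bigl((2^{n/3} m)^{3/2-\epsilon}\bigr) \;=\; O\bigl(m^{3/2-\epsilon}\cdot 2^{\frac{n}{2}(1 - 2\epsilon/3)}\bigr) \;=\; \tO\bigl(2^{\frac{n}{2}(1-\epsilon/3)}\bigr)
\end{equation*}
for $m = \mathrm{poly}(n)$, with no sparsification, no Grover search over sub-formulas, and no auxiliary parameter $\epsilon'$ to tune. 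The machinery you imported from \Cref{QkSATtoDMT} is the right one when the source problem is $\ksat$ (where sparsification legitimately produces $2^{\epsilon' n}$ sparse instances that one then searches over), but here the AWY construction for $\tc$ accepts a general CNF input directly, making the reduction strictly simpler than the $\dmt$ case.
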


\begin{proof}
Suppose we have a $O(N^{1.5 -\epsilon})$ time algorithm for $\tc$ on graphs of $N$ nodes. Let $\phi$ be a CNF formula over $n$ variables and $m$ clauses. We apply the reduction from \cite{abboud_matching_2018} to produce a graph over $O(2^{\frac n3}m)$ nodes. We can then use our $\tc$ algorithm to solve $\sat$ in $\Tilde O(2^{\frac{n}{2}(1-\frac\epsilon3)})$ time.
\end{proof}

Since we find the same lower bounds from $\zwt$ and $k$ $\sat$ to $\dmt$ and $\tc$ for the same ranges of $\Delta$, we can now state the following.

\begin{theorem}\label{Qextremelyweakconjecture}
If $\dmt$ with $\omega(1) \leq \Delta \leq n^{o(1)}$ or $\tc$ can be solved in time $O(n^{1.5-\epsilon})$ for some $\epsilon > 0$, then Conjecture~\ref{conj:PopularConjecture} must be false. 
\end{theorem}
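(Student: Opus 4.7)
The plan is to prove the contrapositive: assuming a bounded-error $O(n^{1.5-\epsilon})$ time quantum algorithm either for $\dmt$ in the range $\omega(1) \leq \Delta \leq n^{o(1)}$ or for $\tc$, I will derive polynomially-faster quantum algorithms for \emph{each} of $\ksat$, $\tsum$ and $\apsp$, simultaneously falsifying Conjectures~\ref{conj:basicQSETH},~\ref{conj:Q3SUM} and~\ref{conj:QAPSP}, and hence their disjunction in Conjecture~\ref{conj:PopularConjecture}.

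For $\ksat$ the argument is one step: feed the assumed solver directly into \Cref{QkSATtoDMT} (if the target is $\dmt$) or into \Cref{QSATtoTC} (if the target is $\tc$). Each of those theorems already produces a quantum algorithm for $\ksat$ whose running time is of the form $\tO(2^{n(1-\epsilon')/2})$ for a constant $\epsilon' > 0$ depending on $\epsilon$, contradicting \Cref{conj:basicQSETH}.

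For $\apsp$ I chain the quantum reductions of \Cref{c4}. The assumed $\dmt$/$\tc$ solver combines with \Cref{q0wttodmt_poly} (respectively \Cref{Q0WTtoTC}) to give an $\tO(n^{1.5-\epsilon})$ solver for $\zwt$; \Cref{QNTto0WT} then yields $\nt$ in $\tO(n^{1.5-\epsilon})$ time; \Cref{QAPNTtoNT} upgrades this to $\tO(n^{2.5-\epsilon/3})$ for $\apnt$; \Cref{QMMtoAPNT} converts the $\apnt$ algorithm into an $\tO(n^{2.5-\epsilon/3})$ algorithm for $\mm$; and finally \Cref{QMMAPSPEQ} delivers $\apsp$ in $\tO(n^{2.5-\epsilon/3})$ time, contradicting \Cref{conj:QAPSP}. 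For $\tsum$ I reuse the intermediate $\tO(n^{1.5-\epsilon})$ solver for $\zwt$ together with the quantum reduction from $\tsum$ to $\zwt$ of \cite{buhrman_limits_2021}, obtaining a sub-linear quantum algorithm for $\tsum$ and thus contradicting \Cref{conj:Q3SUM}.

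The main obstacle is purely bookkeeping: tracking how the $\epsilon$-saving degrades along the chains, most notably the factor-of-$\tfrac{1}{3}$ loss at the $\apnt$-to-$\nt$ step of \Cref{QAPNTtoNT}, and checking that polylogarithmic overheads are absorbed into the $\tO$ notation, so that each intermediate running time is still of the form $T(n)^{1-\delta}$ for a fixed constant $\delta > 0$. A secondary point to verify is that the $\Delta$-regime handed to the $\dmt$ reduction always lies in $\omega(1) \leq \Delta \leq n^{o(1)}$, which is exactly the range in which \Cref{q0wttodmt_poly} and \Cref{QkSATtoDMT} are proved, so each conditional theorem may be invoked as stated.
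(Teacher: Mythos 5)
Your proposal is correct and takes essentially the same route as the paper: chain the reductions of \Cref{c4} (plus the external $\tsum\to\zwt$ reduction from \cite{buhrman_limits_2021}) to turn the assumed $\dmt$/$\tc$ solver into faster algorithms for all three of $\ksat$, $\tsum$ and $\apsp$. One small aside: the paper's own proof cites \Cref{QAPSPtoMM} for the $\mm\to\apsp$ step, which is the wrong direction; your use of \Cref{QMMAPSPEQ} (or equivalently \Cref{QMMtoAPSP}) is the correct citation.
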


\begin{proof}
We know that $k$-SAT reduces to $\dmt$ for $\omega(1) \leq \Delta \leq n^{o(1)}$ from \Cref{QkSATtoDMT}  and reduces to $\tc$ from \Cref{QSATtoTC}. $\apsp$ reduces to $\zwt$ from \Cref{QAPSPtoMM}, \Cref{QMMtoAPNT}, \Cref{QAPNTtoNT} and \Cref{QNTto0WT} and $\tsum$ reduces to $\zwt$ as shown in \cite{buhrman_limits_2021}. We then found that $\zwt$ reduces to $\dmt$ for $\omega(1) \leq \Delta \leq n^{o(1)}$ from \Cref{q0wttodmt_poly} and to $\tc$ from \Cref{Q0WTtoTC}. 
\end{proof}

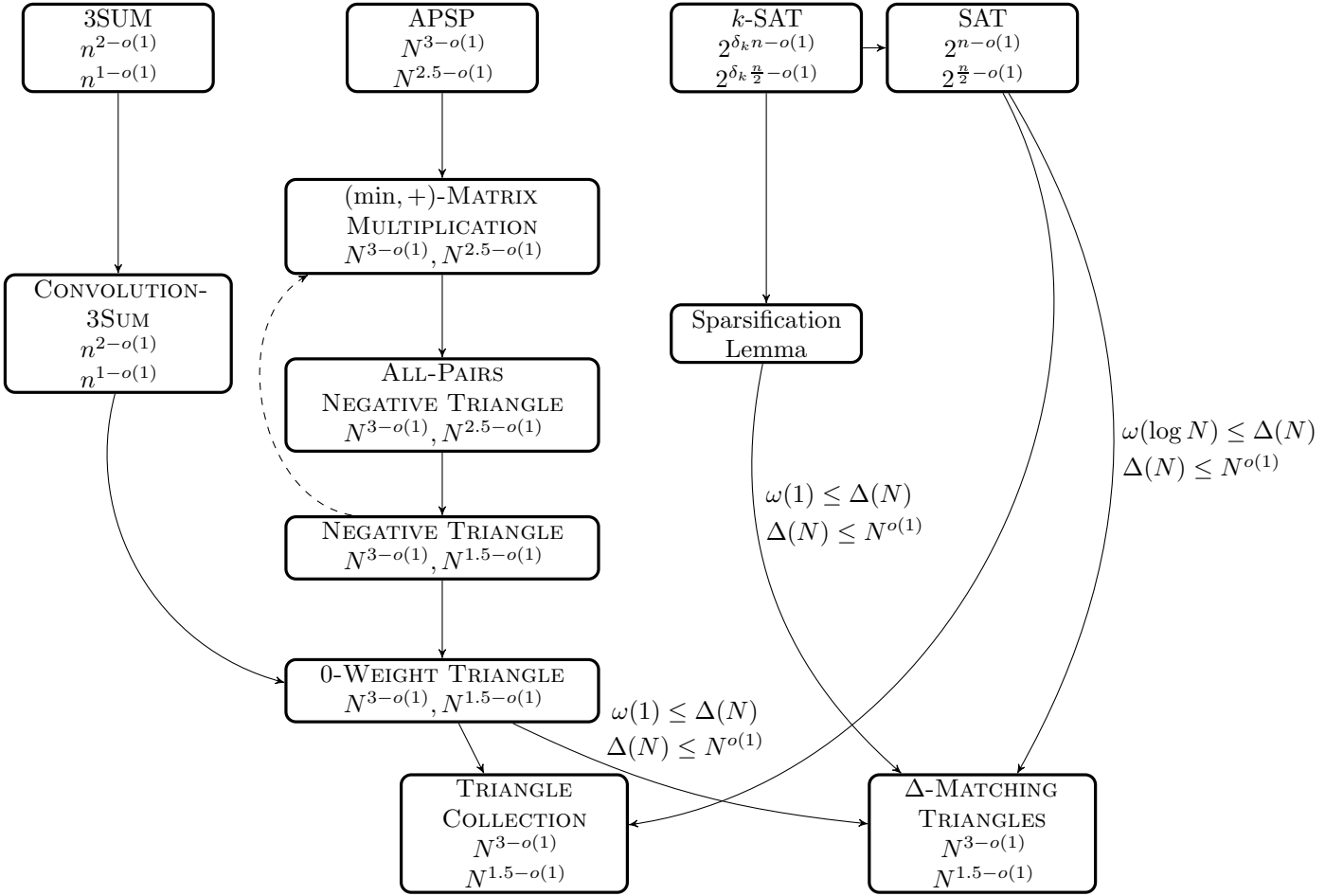
\begin{figure}[h]
\hspace*{-1.5cm}
    \label{fig:my_label}
\begin{tikzpicture}[->,>=stealth']

 % Position of QUERY 
 % Use previously defined 'state' as layout (see above)
 % use tabular for content to get columns/rows
 % parbox to limit width of the listing
 \node[state,
 text width=2.5cm] (3SUM)
 {3SUM\\
 $n^{2-o(1)}$\\$n^{1-o(1)}$};
 
 \node[state,
 below of=3SUM,
 text width=3cm,
 node distance=4cm] (CONV)
 {\textsc{Convolution-}\\ \textsc{3Sum}\\
 $n^{2-o(1)}$\\$n^{1-o(1)}$};
 
 \node[state,
 node distance=4.5cm,
 right of=3SUM,
 text width=2.5cm] (APSP) 
 {APSP\\
 $N^{3-o(1)}$\\$N^{2.5-o(1)}$};
 
 \node[state,
 below of=APSP,
 node distance=2.5cm,
 text width=4.2cm] (min+)
 {\textsc{$(\min,+)$-Matrix} \\\textsc{Multiplication}\\
 $N^{3-o(1)},N^{2.5-o(1)}$};
 
 \node[state,
 below of=min+,
 text width=4.2cm,
 node distance=2.5cm] (APNT)
 {\textsc{All-Pairs}\\\textsc{Negative Triangle}\\
 $N^{3-o(1)},N^{2.5-o(1)}$};
 
 \node[state,
 below of=APNT,
 text width=4.2cm,
 node distance=2cm] (NT)
 {\textsc{Negative Triangle}\\
 $N^{3-o(1)},N^{1.5-o(1)}$};

 \node[state,
 below of=NT,
 text width=4.2cm,
 node distance=2cm] (0-WT)
 {$\zwt$\\
 $N^{3-o(1)},N^{1.5-o(1)}$};
 
 \node[state,
 below of=0-WT,
 text width=3cm,
 node distance=2cm,
 xshift=7.5cm] (delta)
 {\textsc{$\Delta$-Matching} \\\textsc{Triangles}\\
 $N^{3-o(1)}$\\$N^{1.5-o(1)}$};
 
 \node[state,
 left of=delta,
 text width=3cm,
 node distance=6.5cm] (TC)
 {\textsc{Triangle}\\\textsc{Collection}\\
 $N^{3-o(1)}$\\$N^{1.5-o(1)}$};

 \node[state,
 node distance=4.5cm,
 right of=APSP,
 text width=2.5cm] (kSAT) 
 {$k$-SAT\\
 $2^{\delta_k n - o(1)}$\\$2^{\delta_k \frac n2 - o(1)}$};
 
 \node[state,
 below of=kSAT,
 text width=2.5cm,
 node distance=4cm] (sparse)
 {Sparsification \\ Lemma};
 
 \node[state,
 node distance=3cm,
 right of=kSAT,
 text width=2.5cm] (SAT) 
 {SAT\\
 $2^{n-o(1)}$\\$2^{\frac n2 - o(1)}$};

 % draw the paths and and print some Text below/above the graph
     \path  (APSP)  edge    node[anchor=south,below]{}
                    node[anchor=north,above]{} (min+)
            (min+)  edge    node[anchor=south,below]{}
                    node[anchor=north,above]{} (APNT)
            (APNT)  edge    node[anchor=south,below]{}
                    node[anchor=north,above]{} (NT)
            (NT)    edge    node[anchor=south,below]{}
                    node[anchor=north,above]{} (0-WT)
            (NT)    edge[bend left=70,dashed]  
                        node[anchor=west,left]{}
                        node[anchor=west,left]{} (min+)
            (0-WT)  edge[bend right=10]    node[anchor=south,above,xshift=0cm,yshift=.8cm]{$\omega(1) \leq \Delta(N)$}
                            node[anchor=north,above,xshift=-0cm,yshift=.3cm]{$\Delta(N) \leq N^{o(1)}$} (delta)
            (0-WT)  edge    node[anchor=south,below]{}
                            node[anchor=north,above]{} (TC)
            (3SUM)  edge    node[anchor=south,below]{}
                            node[anchor=north,above]{} (CONV)
            (CONV)  edge[bend right=45]    node[anchor=south,below]{}
                            node[anchor=west,left]{} (0-WT)
            (kSAT)  edge    node[anchor=south,below]{}
                            node[anchor=north,above]{} (sparse)
            (sparse)  edge[bend right=30]    node[anchor=south,right,yshift=.8cm,xshift=-.2cm]{$\Delta(N)\leq N^{o(1)}$}
                            node[anchor=north,right,yshift=1.3cm,xshift=-.2cm]{$\omega(1) \leq \Delta(N)$ } (delta)
            (SAT)  edge[bend left]    node[anchor=east,right]{$\omega(\log N) \leq \Delta(N)$ }                node[anchor=east,right,yshift=-.5cm]{$\Delta(N) \leq N^{o(1)}$} (delta)
            (SAT)  edge[bend left=55]    node[anchor=south,below]{}
                            node[anchor=south,below]{} (TC)
            (kSAT)  edge    node[anchor=west,below]{}
                            node[anchor=east,above]{} (SAT)
                            ;
\end{tikzpicture}
    \caption{Quantum fine-grained reductions to $\dmt$ and $\tc$. The parameter $n$ denotes the size of the sets in the case of $\tsum$ and CONV-3SUM and the number of variables in the case of $\sat$ and $\ksat$. The parameter $N$ denotes the number of nodes in a graph. The first lower bound in each node denotes the known classical lower bound and second lower bound denotes the new quantum lower bound. In the case where multiple edges arrive at one node, the lower bound in the second line of the node is the same for all reductions. For the dashed edge we only know of a classical reduction. Lastly, the lower bounds for $\dmt$ hold only for the $\Delta$ values denoted by the labels on the incoming edges.}
        \label{fig:flowchart}
\end{figure}
\clearpage

\section{$\dmt$ and $\tc$: Upper Bounds}
\label{c5}
In the previous section we saw a series of lower bounds, conditioned on our quantum hardness conjectures. For most of these problems we find matching upper bounds that follow from trivial applications of Grover Search, except for $\dmt$ and $\tc$, for which we find upper bounds by clever use of data structures and Ambainis' Variable Time Grover Search. 
\begin{theorem}[Variable Time Grover Search (VTGS) \cite{ambainis_quantum_2006}]\label{TVGS}
Let $x \in \{0,1\}^n$ and suppose we can compute the value of $x_i$ in time $t_i$.  There exists a quantum algorithm such that the following hold: If there is some $i \in \{0,1\}^{\log n}$ such that $x_i=1$, then the algorithm will output some $j \in \{0,1\}^{\log n}$ such that $x_j=1$ with probability~$\frac23$. If there is no such $i \in \{0,1\}^{\log n}$, then the algorithm will output $0$ with probability~$\frac23$. The algorithm takes $O(\sqrt{\sum_{i=1}^n t_i^2})$ time and makes $O(\sqrt{\sum_{i=1}^n t_i^2})$ queries to~$x$.
\end{theorem}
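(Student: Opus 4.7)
The plan is to follow Ambainis' nested amplitude-amplification template, in which the cost of a Grover-type search is distributed across a logarithmic hierarchy of ``time levels'' so that each index $i$ contributes roughly in proportion to its own running time $t_i$. The key tool is that amplitude amplification composes: one can amplify a search subroutine whose oracle call is itself the output of a previously amplified search, paying only a constant multiplicative overhead per level.

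First I would round each $t_i$ up to the nearest power of two, which loses only a constant factor, and write $L = \lceil \log_2 \max_i t_i \rceil$. The indices then split into level sets $S_\ell = \{ i : t_i = 2^\ell \}$ for $0 \leq \ell \leq L$. For each level I define a truncated evaluation unitary $U_\ell$ that attempts to compute $x_i$ coherently but aborts after $2^\ell$ steps, writing a ``done'' flag alongside the candidate value of $x_i$ in an auxiliary register. Thus $U_\ell$ correctly evaluates every index of level at most $\ell$ and leaves higher-level indices flagged as unresolved, while making at most $2^\ell$ queries to $x$.

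I then build a sequence of search subroutines $\mathcal{A}_0, \mathcal{A}_1, \ldots, \mathcal{A}_L$ recursively. Starting from the uniform superposition over $[n]$, $\mathcal{A}_\ell$ is obtained from $\mathcal{A}_{\ell-1}$ by applying $U_\ell$ and then performing amplitude amplification on the projector ``done $\land$ $x_i = 1$,'' using a number of Grover iterations calibrated to the currently visible marked mass. The standard amplitude-amplification calculation, applied to a subroutine whose per-call cost at level $\ell$ is $O(2^\ell)$, yields a per-level contribution of $O(2^\ell \sqrt{|S_\ell|})$, and a Cauchy--Schwarz bound then gives
\[
\sum_{\ell = 0}^{L} O\!\bigl( 2^\ell \sqrt{|S_\ell|} \bigr) \;\leq\; O\!\left( \sqrt{L} \cdot \sqrt{\sum_{\ell=0}^L 2^{2\ell}\, |S_\ell|} \right) \;=\; \tO\!\left( \sqrt{\sum_{i=1}^n t_i^2} \right),
\]
which matches the stated cost up to polylog factors absorbed into the $O$-notation; the query count is bounded in exactly the same way since $U_\ell$ uses at most $2^\ell$ queries.

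The main obstacle, and the step where Ambainis' argument requires the most care, is controlling how error propagates through the $L$ nested amplifications: each $U_\ell$ is not a clean reflection but an aborted computation, and repeated amplification magnifies both amplitude and phase errors. I would handle this via a robust version of amplitude amplification, together with an inductive invariant stating that after level $\ell$ the algorithm's state lies within $1/\mathrm{poly}(L)$ trace distance of the ideal one restricted to completed items; paying a $\mathrm{polylog}(L)$ overhead per level is enough to maintain the invariant. A final standard boost then raises the success probability to $\tfrac{2}{3}$, giving the bound claimed in the theorem.
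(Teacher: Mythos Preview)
The paper does not prove this theorem. It is stated with a citation to Ambainis~\cite{ambainis_quantum_2006} and used purely as a black box in the algorithms of Section~\ref{c5}; there is no ``paper's own proof'' to compare against. So your proposal is not filling in something the authors left out---you are sketching the proof of a cited external result.

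That said, your outline is a reasonable high-level summary of Ambainis' construction: bucketing indices by powers of two, truncated evaluation at each level, and nested amplitude amplification across the levels is indeed the skeleton of the argument. One point worth flagging is your Cauchy--Schwarz step, which introduces an extra $\sqrt{L}$ factor and only yields a $\tO$ bound, whereas the theorem as stated here claims a clean $O(\sqrt{\sum_i t_i^2})$. Ambainis' actual analysis is more careful about how the amplification costs telescope across levels, avoiding the naive Cauchy--Schwarz loss; your sketch as written would fall a logarithmic factor short of the stated bound. The error-propagation discussion is also where most of the real work lies, and ``robust amplitude amplification plus an inductive trace-distance invariant'' is the right shape of the argument but would need substantial detail to be a proof rather than a plan.
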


\subsection{Quantum Algorithm for $\dmt$}
Recall that for $\dmt$, given a graph as an input we want to find out whether there exists a colour triple for which there are at least $\Delta$ triangles. Depending on the value of $\Delta=n^{\alpha}$, we exhibit an algorithm that invokes one of the following two subroutines: For small $\alpha$, it uses brute force with a variant of Grover Search, while for large $\alpha$, our subroutine uses matrix multiplication. Additionally, for the first case our algorithm requires fast access to the list of vertices coloured by $i$ for every colour $i\in \Gamma$. For that we use a data structure, which with the help of quantum random-access gates $\RAG$s (defined in Equation~\ref{eq:RAG}), allows us to index in $O(1)$ time \footnote{The constant time is in the word-RAM model.}. On the other hand, for large $\alpha$, notice that as $\alpha$ increases, the amount of colour triples for which there are even enough nodes in the graph to form $n^\alpha$ triangles decreases because the number of triples of nodes any graph $G$ (of $n$ nodes) can have is $n^3$. Therefore, number of colour triples $(i,j,k) \in \Gamma^3$ that can have at least $\Delta$ triangles, must have at least $\Delta$ different triples of vertices, is at most $\frac{n^3}{\Delta}$. Moreover, as we have computed the values of $|V_i|$ for every $i \in \Gamma$ in the pre-processing step we can easily filter out the un-promising candidates and only check if there exists $\Delta$ triangles (using matrix multiplication) for the colour triples with enough nodes of that colour triple. 

\begin{theorem}\label{QDMTu}
There exists a quantum algorithm that solves $\dmt$ on a graph of $n$ nodes in $\tO(\min\{n^{1.5 + \frac\alpha2},n^{1.5 + \omega -\frac\alpha2}\})$ time with $\Delta = n^\alpha$ for $0\leq \alpha \leq 3$.
\end{theorem}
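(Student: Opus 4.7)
The plan is to design two quantum subroutines, one tailored to each regime of $\alpha$, and to return whichever gives a faster running time; the crossover between them lies at $\alpha = \omega$. Both share a preprocessing phase that, using $\RAG$s, builds in $O(n)$ time a data structure granting $O(1)$-time indexed access to each colour class $V_i := \gamma^{-1}(i)$, together with its cardinality $|V_i|$. The outer wrapper in both cases is Ambainis' Variable Time Grover Search (Theorem~\ref{TVGS}) over the $|\Gamma|^3 \le n^3$ colour triples.

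\emph{Small-$\Delta$ subroutine.} For a fixed colour triple $(i,j,k)$, use the data structure to query in superposition the $|V_i|\cdot|V_j|\cdot|V_k|$ candidate vertex triples with that colour pattern, using $O_x$ to check whether each is a triangle. Iterated (threshold) Grover search for $\Delta$ marked triangles runs in time
\[
t_{i,j,k} \;=\; \tO\!\left(\sqrt{\Delta\cdot |V_i|\,|V_j|\,|V_k|}\,\right).
\]
Plugging these variable times into Theorem~\ref{TVGS} and using $\sum_{(i,j,k) \in \Gamma^3} |V_i|\,|V_j|\,|V_k| = \bigl(\sum_i |V_i|\bigr)^3 = n^3$ yields a total running time of
\[
\tO\!\left(\sqrt{\sum_{(i,j,k)} t_{i,j,k}^{\,2}}\,\right) \;=\; \tO\!\left(\sqrt{\Delta\cdot n^3}\,\right) \;=\; \tO\!\left(n^{1.5+\alpha/2}\right).
\]

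\emph{Large-$\Delta$ subroutine.} A colour triple can host $\ge \Delta$ triangles only if $|V_i|\,|V_j|\,|V_k| \ge \Delta$, and by the identity above there are at most $n^3/\Delta$ such \emph{candidate} triples. Each candidate is certified by computing the exact triangle count $\mathrm{tr}(A_{ij} A_{jk} A_{ki})$ of the colour-restricted adjacency submatrices using (classical) fast matrix multiplication in time $O(n^{\omega})$; non-candidates are rejected in $O(1)$ time after a single cardinality check. Applying Theorem~\ref{TVGS} to these variable times yields
\[
\tO\!\left(\sqrt{\,n^3\cdot 1 \;+\; \tfrac{n^3}{\Delta}\cdot n^{2\omega}\,}\right) \;=\; \tO\!\left(n^{1.5+\omega-\alpha/2}\right),
\]
and the claimed bound follows by taking the minimum of the two subroutines.

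The main technical obstacles I anticipate are (i) verifying that the colour-class data structure (built with $\RAG$s per the model of Section~\ref{sec:Preliminaries}) indeed supports coherent indexed access at $O(1)$ overhead inside each Grover iterate, so that the cost of a Grover query truly is the cost of a single edge lookup, and (ii) controlling the error propagating through the three nested layers of quantum search, since the outer VTGS, the inner threshold Grover, and the decision wrappers each introduce constant failure probability. The second is routine and absorbed into the $\tO(\cdot)$ via $O(\log n)$ parallel repetitions at each layer; the first is the delicate design step that the full algorithm of Section~\ref{c5} must work out carefully.
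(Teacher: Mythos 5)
Your proposal is correct and follows essentially the same route as the paper: the same $O(n)$-time preprocessing into colour-indexed buckets accessible via $\RAG$s, the same Variable Time Grover Search wrapper over $\Gamma^3$, with the small-$\Delta$ branch using threshold Grover search at $t_{i,j,k} = \tO(\sqrt{\Delta\,|V_i||V_j||V_k|})$ and the large-$\Delta$ branch filtering on $|V_i||V_j||V_k|\ge\Delta$ before counting triangles via fast matrix multiplication, yielding $\tO(n^{1.5+\alpha/2})$ and $\tO(n^{1.5+\omega-\alpha/2})$ respectively with crossover at $\alpha=\omega$. The two technical concerns you flag (coherent $O(1)$ indexed access, and error amplification across nested search layers) are exactly the ones the paper handles, respectively by the hash-table construction and by absorbing $O(\mathrm{poly}\log n)$ repetitions into the $\tO(\cdot)$.
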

\begin{proof}
Let $G=(V,E)$ be the input, a coloured graph with $|V|=n$ and colours given by $\gamma: V \rightarrow \Gamma$. Let $\Delta=n^\alpha$ for a given $\alpha \in [0,3]$. Given a colour $i \in \Gamma$, let $V_i\subseteq V$ denote the subset containing only $i$-coloured nodes, and use $|V_i|$ to denote the number of vertices in $V_i$. 

\paragraph{Pre-processing step.} In $O(n)$ time one can compute values $|V_i|$ for all $i \in \Gamma$. Having computed the $|V_i|$ for all $i \in \Gamma$, create a hash-table of $|\Gamma|$ buckets. Each bucket is indexed by $i \in |\Gamma|$ containing an array of size $|V_i|$, respectively. We go over all the vertices $v\in V$ and place them in the hash table corresponding to the bucket indexed by $\gamma(v)$, i.e., the colour of vertex $v$. This entire process takes $O(n)$ time and uses $O(n)$ space.

\paragraph{Subroutine for small $\alpha$.} We use VTGS on the set $\Gamma^3$. For every colour triple $(i,j,k) \in \Gamma^3$, let $t_{i,j,k}$ denote the time taken to determine whether $G$ contains $\Delta$ triangles of colour triple $(i,j,k)$. This can be done in $t_{i,j,k} = \tO(\sqrt{\Delta |V_i||V_j||V_k|})= \tO(\sqrt{n^\alpha|V_i||V_j||V_k|})$ time using (a variant of) Grover Search where the queries to this Grover-like subroutine are indexed by $(a,b,c) \in [|V_i|]\times[|V_j|]\times[|V_k|]$. Note that, the $\tO(\cdot)$ hides the $\poly(\log(n))$ factors that arise because we want the probability of failure to be reduced to $\frac{1}{\poly(n)}$. Moreover, every such query to the Grover-like subroutine can be implemented in three queries to the adjacency matrix given to us as input, in the following way: Access the $a^{\text{th}}$ node of hash bucket containing $V_i$, $b^{\text{th}}$ node of hash bucket, which is the data structure that we used to pre-process the nodes, containing $V_j$, $c^{\text{th}}$ node of hash bucket containing $V_k$, and, check if there is a triangle labelled by nodes $a,b,c$ in $G$.

Since we can use VTGS, the total time taken for this algorithm is $T(n) = O(\sqrt{\sum_{i,j,k \in \Gamma}t^2_{i,j,k}})$, and moreover, because we have $\sum_{i \in \Gamma}|V_i| = n$ we find the following.
\begin{align*}
       T(n) &= O(\sqrt{\sum_{i,j,k \in \Gamma}t^2_{i,j,k}}) 
       =\tO(\sqrt{\sum_{i,j,k \in \Gamma} n^{\alpha}|V_i| |V_j| |V_k|})\\
       &=\tO(\sqrt{n^{\alpha}\sum_{i \in \Gamma}|V_i|\sum_{j \in \Gamma}|V_j|\sum_{k \in \Gamma}|V_k|}) 
       = \tO(\sqrt{n^{\alpha}n^3}) = O(n^{1.5+\frac\alpha2}).
\end{align*} 
\paragraph{Subroutine for large $\alpha$.} For every triple $(i,j,k) \in \Gamma^3$, let $t_{i,j,k}$ denote the time taken to compute the following:
\begin{itemize}
    \item Step 1: Check if $|V_i|\cdot |V_j| \cdot |V_k| < \Delta$. If yes then do nothing more, in that case $t_{i,j,k}=O(1)$. Else, check Step~2 below.
    \item Step 2: Check if there are at least $\Delta$ triangles of colour triple $(i,j,k)$ in $G$. As we are in the large $\alpha$ regime we do not want to use threshold version of Grover search, instead we use matrix multiplication to compute the cube of the adjacency matrix restricted to the entries only coloured by coloured by $i,j,k$ and taking trace of the diagonal. A \emph{yes} instance is when the computed trace value is greater than equal to $\Delta$, else it is a \emph{no} instance. The total time taken in Step~2 is $t_{i,j,k}=\tO(n^{\omega})$. Here the $\tO(\cdot)$ hides the $\poly(\log(n))$ factors that arise because we want the probability of failure to be $\frac{1}{\poly(n)}$.\footnote{With this, using union bound we can argue that the total error probability of the main algorithm is at most a small constant.}
\end{itemize}
Now to analyse the time taken for the large $\alpha$ case: Using VTGS we again have
\begin{align*}
    T(n) &= O(\sqrt{\sum_{i,j,k \in \Gamma}t^2_{i,j,k}})
    = O(\sqrt{\sum_{\substack{i,j,k \in \Gamma\\\text{s.t. }|V_i||V_j||V_k|\ge \Delta}} \hspace{-2em} t^2_{i,j,k} + \sum_{\substack{i,j,k \in \Gamma\\\text{s.t. }|V_i||V_j||V_k|< \Delta}} \hspace{-2em} t^2_{i,j,k}})\\
    &=\tO(\sqrt{\sum_{\substack{i,j,k \in \Gamma\\\text{s.t.\ }|V_i||V_j||V_k|\ge \Delta}} \hspace{-2em} n^{2\omega} + \sum_{\substack{i,j,k \in \Gamma\\\text{s.t.\ }|V_i||V_j||V_k|< \Delta}} \hspace{-2em} O(1)}) 
    \le \tO(\sqrt{\frac{n^{2\omega} \cdot n^3}{\Delta} + n^3})= \tO(\sqrt{\frac{n^{2\omega} \cdot n^3}{\Delta}})=\tO(n^{\omega+1.5-\frac{\alpha}{2}}).
\end{align*}
\paragraph{Combined approach.} Given an input instance of $\dmt$, we first compute which of the two approaches is faster based on $\Delta=n^\alpha$ and then apply that approach. We find $T(n) = \tO(\min\{n^{1.5 + \frac\alpha2},n^{1.5 + \omega -\frac\alpha2}\})$.

Setting 
$1.5 + \frac\alpha2 = 1.5 + \omega - \frac\alpha2$,
we see that for $\alpha<\omega$ the small $\alpha$ algorithm will be faster while for $\alpha > \omega$ our large $\alpha$ algorithm is faster. Therefore, $\dmt$ for any $\Delta=n^{\alpha}$ can be solved in
$\tO(\min\{n^{1.5 + \frac\alpha2},n^{1.5 + \omega -\frac\alpha2}\})$ time quantumly. 
\end{proof}

From \Cref{QDMTu} we get a worst case corollary and a corollary for the range of $\Delta$ for which we found reductions in \Cref{c4}. 

\begin{corollary}
There exists a quantum algorithm that solves $\dmt$ on graphs of $n$ nodes in $\tO(n^{1.5 + \frac\omega2})$ time for any $\Delta$.
\end{corollary}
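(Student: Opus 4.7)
The plan is to directly derive this worst-case bound from Theorem~\ref{QDMTu}, whose statement is that for any $\Delta = n^\alpha$ with $0 \leq \alpha \leq 3$ there is a quantum algorithm running in time
\[
\tO\!\left(\min\{n^{1.5+\frac{\alpha}{2}},\, n^{1.5+\omega-\frac{\alpha}{2}}\}\right).
\]
I will therefore maximise this minimum over $\alpha \in [0,3]$ to obtain a single bound that holds uniformly in $\Delta$.

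First I would observe that the two expressions inside the minimum are monotone in opposite directions in $\alpha$: the quantity $n^{1.5+\alpha/2}$ is increasing in $\alpha$, while $n^{1.5+\omega-\alpha/2}$ is decreasing in $\alpha$. Hence the minimum is maximised precisely at the value of $\alpha$ where the two expressions meet. Setting
\[
1.5 + \tfrac{\alpha}{2} \;=\; 1.5 + \omega - \tfrac{\alpha}{2}
\]
yields the crossover $\alpha = \omega$, at which the common value of the two exponents is $1.5 + \omega/2$. Since $\omega \in [2,3]$, this crossover lies in the admissible range $[0,3]$.

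I would then verify the two cases separately. For $\alpha \leq \omega$, the small-$\alpha$ subroutine of Theorem~\ref{QDMTu} is dominant and yields $\tO(n^{1.5+\alpha/2}) \leq \tO(n^{1.5+\omega/2})$. For $\alpha \geq \omega$, the large-$\alpha$ subroutine is dominant and yields $\tO(n^{1.5+\omega-\alpha/2}) \leq \tO(n^{1.5+\omega/2})$. In either regime, the algorithm from Theorem~\ref{QDMTu} runs in time at most $\tO(n^{1.5+\omega/2})$, which gives the claimed worst-case bound for any $\Delta$. There is no genuine obstacle here; the main (and only) point is the one-line optimisation over the parameter $\alpha$.
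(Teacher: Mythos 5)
Your proof is correct and follows the same reasoning the paper uses: the worst-case bound is obtained by maximising $\min\{n^{1.5+\alpha/2}, n^{1.5+\omega-\alpha/2}\}$ over $\alpha$, and the crossover at $\alpha = \omega$ (already identified in the paper's proof of Theorem~\ref{QDMTu}) yields the uniform bound $\tO(n^{1.5+\omega/2})$.
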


Note that this means that for the current value of $\omega \approx 2.3728$, the matrix-multiplication constant, we can solve $\dmt$ sub-cubically on a quantum computer for any range of $\Delta$.

\begin{corollary}\label{qDMTsmallD}
There exists a quantum algorithm that solves $\dmt$ on graphs of $n$ nodes in $\tO(n^{1.5+o(1)})$ time for $\Delta \leq n^{o(1)}$.
\end{corollary}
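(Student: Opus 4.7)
The plan is to simply instantiate \Cref{QDMTu} at the relevant regime of $\Delta$. The theorem gives a quantum algorithm running in time
\[
\tO\!\left(\min\{n^{1.5 + \alpha/2},\, n^{1.5 + \omega - \alpha/2}\}\right)
\]
whenever $\Delta = n^{\alpha}$ with $0 \le \alpha \le 3$. Under the hypothesis $\Delta \le n^{o(1)}$, we may write $\Delta = n^{\alpha(n)}$ with $\alpha(n) = o(1)$, so in particular $\alpha(n) < \omega$ for all sufficiently large $n$, and by the comparison at the end of the proof of \Cref{QDMTu} the small-$\alpha$ branch dominates. That branch contributes time $\tO(n^{1.5 + \alpha/2}) = \tO(n^{1.5 + o(1)})$, which is exactly the claimed bound.

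The only point requiring a sentence of care is to confirm that the small-$\alpha$ branch itself is applicable when $\alpha = o(1)$: its analysis in \Cref{QDMTu} uses the Variable Time Grover Search from \Cref{TVGS} with per-triple cost $t_{i,j,k} = \tO(\sqrt{\Delta \cdot |V_i||V_j||V_k|})$, and this estimate is valid for every $\Delta \in [0, n^3]$, so in particular for sub-polynomial $\Delta$. The pre-processing hash table built in the proof of \Cref{QDMTu} also takes only $O(n)$ time and is independent of $\Delta$. No additional construction is needed.

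Thus the whole content of the corollary is the substitution $\alpha = o(1)$ into the first term of the minimum of \Cref{QDMTu}. There is no real obstacle; the ``hard part,'' such as it is, is merely bookkeeping the $\tO(\cdot)$ and $n^{o(1)}$ factors to verify that $n^{1.5 + \alpha(n)/2}$ is indeed absorbed into $\tO(n^{1.5 + o(1)})$, which follows since $n^{o(1)}$ dominates any poly-logarithmic overhead hidden in the $\tO$ notation.
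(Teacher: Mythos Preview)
Your proposal is correct and matches the paper's approach: the corollary is stated without proof and is simply the instantiation of \Cref{QDMTu} at $\alpha = o(1)$, taking the first term of the minimum. Your additional remarks about the applicability of the small-$\alpha$ branch and the pre-processing are sound but more detailed than necessary for what is an immediate specialization.
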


The commonly conjectured lower bound for $\omega$ is $2$, in which case we have a worst case complexity for $\dmt$ of $\tO(n^{2.5})$, matching the quantum complexity of other problems encountered in our reductions from $\apsp$. That is not to say that a faster algorithm is not possible of course. Pushing the $n^{1.5-o(1)}$ lower bound for $\dmt$ up for polynomial $\Delta$ is challenging with current techniques and a matching upper bound of $\tO(n^{1.5})$ for unrestricted $\Delta$ is not impossible, albeit unlikely. 

\subsection{Quantum Algorithm for $\tc$}
Using a similar strategy as for the case of small $\Delta$ in the proof of \Cref{QDMTu} we also find a tight upper bound for $\tc$. Recall that for $\tc$ we want to know whether there is a triangle for every possible colour triple in a given graph. Equivalently, we may ask whether there exists a colour triple in a graph for which there is no triangle.

\begin{theorem}\label{QTCu}
There exists a quantum algorithm that solves $\tc$ in $\tO(n^{1.5})$ time.
\end{theorem}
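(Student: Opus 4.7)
The plan is to reduce $\tc$ to variable-time search (VTGS, \Cref{TVGS}) over colour triples, where each inner check is itself a Grover search over vertex triples of the appropriate colour classes. Since the statement is equivalent to deciding whether there exists a triple $(i,j,k)\in\Gamma^3$ with \emph{no} triangle $(a,b,c)\in V_i\times V_j\times V_k$, this fits the VTGS template cleanly: the outer VTGS looks for a ``bad'' colour triple, and its per-index evaluation time $t_{i,j,k}$ is the time needed to certify presence or absence of such a triangle.

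First I would run the same $O(n)$-time pre-processing step as in the proof of \Cref{QDMTu}: compute $|V_i|$ for all $i\in\Gamma$ and store the vertices in a hash-table of $|\Gamma|$ buckets so that, using $\RAG$s, the $a$-th vertex of colour $i$ can be fetched in $O(1)$ time. Next, for a fixed colour triple $(i,j,k)$ I would define the subroutine that Grover-searches the product set $[|V_i|]\times [|V_j|]\times [|V_k|]$, where each queried index $(a,b,c)$ is mapped via the hash-table into three vertices and three adjacency-matrix lookups test whether they form a triangle. With $O(\log n)$ amplification this decides existence of an $(i,j,k)$-coloured triangle with failure probability $1/\poly(n)$ in time
\[
t_{i,j,k}=\tO\!\left(\sqrt{|V_i|\cdot|V_j|\cdot|V_k|}\right).
\]
Setting $x_{i,j,k}=1$ iff this subroutine reports ``no triangle'', one instance of VTGS over $\Gamma^3$ then either returns such an $(i,j,k)$ (answer \emph{no} to $\tc$) or certifies none exists (answer \emph{yes}).

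The running time bound is essentially an algebraic identity. Applying \Cref{TVGS} and substituting the $t_{i,j,k}$,
\begin{align*}
T(n)=O\!\left(\sqrt{\sum_{(i,j,k)\in\Gamma^3} t_{i,j,k}^{\,2}}\right)
&=\tO\!\left(\sqrt{\sum_{(i,j,k)\in\Gamma^3} |V_i|\,|V_j|\,|V_k|}\right)\\
&=\tO\!\left(\sqrt{\Bigl(\sum_{i\in\Gamma}|V_i|\Bigr)^{3}}\right)
=\tO\!\left(\sqrt{n^3}\right)=\tO(n^{1.5}),
\end{align*}
where the factorisation $\sum_{i,j,k}|V_i||V_j||V_k|=(\sum_i|V_i|)^3=n^3$ uses only that the $V_i$ partition~$V$.

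The only real subtlety — and thus the main obstacle — is error control through the two nested searches: VTGS as stated in \Cref{TVGS} assumes that each $x_{i,j,k}$ can be evaluated essentially exactly in time $t_{i,j,k}$, whereas our inner Grover check is bounded-error. I would handle this the same way as in the proof of \Cref{QDMTu}: amplify the inner Grover search to failure probability $1/\poly(n)$ at the cost of a $\polylog(n)$ factor absorbed into $\tO(\cdot)$, and then union-bound over the polynomially many inner calls inspected by the outer VTGS so that the overall failure probability stays at most a small constant. Everything else — the data structure, the on-the-fly edge queries, and the square-root sum computation — is routine, giving the claimed $\tO(n^{1.5})$ bound.
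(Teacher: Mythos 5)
Your proposal matches the paper's proof essentially step for step: it reduces $\tc$ to a VTGS over $\Gamma^3$ searching for a colour triple with no triangle, uses the same $O(n)$ hash-table pre-processing from \Cref{QDMTu} for $O(1)$ vertex access, runs an amplified inner Grover search over $V_i\times V_j\times V_k$ with $t_{i,j,k}=\tO(\sqrt{|V_i||V_j||V_k|})$, and concludes via the same factorisation $\sum_{i,j,k}|V_i||V_j||V_k|=n^3$. The explicit discussion of error amplification for the nested bounded-error subroutine is a small but welcome addition of detail that the paper handles only implicitly.
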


\begin{proof}
Let $G=(V,E)$ be a coloured graph with $|V|=n$ and colours given by $\gamma: V \rightarrow \Gamma$. We use the Variable-Time Grover Search algorithm to the set of all colour triples $\Gamma^3$, to determine whether there is a colour triple for which there is no triangle in $G$. Let $(i,j,k)$ be a triple of colours and let the time it takes to check whether $G$ contains a triangle in these colours be $t_{i,j,k}$. Then the algorithm takes $T(n) = O(\sqrt{\sum_{i,j,k \in \Gamma}t^2_{i,j,k}})$ time. Given a colour $i$, let $V_i\subseteq V$ be the subset containing only nodes in that colour and let $|V_i|$ be the number of vertices in $V_i$. With some pre-processing of the input, just like it is done in the proof of Theorem~\ref{QDMTu}, for any triple of colours $(i,j,k)$, we can Grover Search over the set $V_i\times V_j \times V_k$ to determine whether there is a triangle in the induced sub-graph in $\tO(\sqrt{|V_i| \cdot |V_j|\cdot |V_k|})$ time, and we have that $t^2_{i,j,k} = \tO(|V_i| \cdot |V_j|\cdot |V_k|)$. Since it holds that $\sum_{i\in \Gamma}|V_i| = n$ we have
\begin{align*}
       T(n) &= O(\sqrt{\sum_{i,j,k \in \Gamma}t^2_{i,j,k}}) 
       =\tO(\sqrt{\sum_{i,j,k \in \Gamma} |V_i| \cdot |V_j|\cdot |V_k|})
       =\tO(\sqrt{\sum_{i \in \Gamma}|V_i|\sum_{j \in \Gamma}|V_j|\sum_{k \in \Gamma}|V_k|}) 
       =\tO(\sqrt{n^3}) = \tO(n^{1.5}).
\end{align*}
Therefore, we can solve $\tc$ in $\tO(n^{1.5})$ time quantumly. 
\end{proof}

\section{Discussion and Future Work}
\label{c6}
In the course of proving quantum time lower bounds for $\dmt$ and $\tc$, we also prove conditional lower bounds for many other related problems in the process and were able to make some interesting observations. In the classical setting, it can be seen as surprising that seemingly simple problems such as $\nt$ and $\zwt$ are at least as hard as problems like $\apsp$ and $\mm$, which seem more complex. With our results in the quantum setting we find that this intuition regarding the difference in complexity of these problems (to a reasonable degree) is well-founded, because here a quadratic gap in complexity in both the lower and the upper bounds of these problems is witnessed instead. %A consequence of this gap in complexity was the loss of the reduction from $\nt$ back to $\mm$. 

This paper leaves us with numerous future directions. For example,
\begin{itemize}
    \item In the classical case, it is possible to find a fine-grained reduction from $\nt$ to $\mm$, but it seems much harder to find an equivalent reduction in the quantum case.
    To pick up where this work has left off would be to face the challenges of finding such a reduction.
    \item The $\dmt$ problem presents another series of worthwhile challenges. Due to the dependency of the upper bound on the matrix-multiplication exponent~$\omega$, faster algorithms for $\dmt$ can definitely be found, either by improving $\omega$, or by finding an algorithm with no or lower dependency on $\omega$. The question remains then whether there exists quantum algorithms that have a better than $\tO(n^{2.5})$ worst case $\Delta$ complexity, which would be the worst case $\Delta$ complexity if $\omega=2$. To reinforce the likelihood of $\tO(n^{2.5})$ being the best worst case $\Delta$ upper bound for \textsc{$\Delta$-Matching Triangles}, we could look for reductions to \textsc{$\Delta$-Matching Triangles} that work for polynomial ranges of $\Delta$. In the classical setting, we saw in \cite{abboud_matching_2018} that for constant ranges of $\Delta$, \textsc{$\Delta$-Matching Triangles} permits a faster than cubic algorithm, and it leads us to wonder whether a faster than $\tO(n^{1.5})$ algorithm exists for \textsc{$\Delta$-Matching Triangles} with constant $\Delta$ in the quantum case. The \textsc{$\Delta$-Matching Triangles} and \textsc{Triangle Collection} problems are of course of special interest due to their connection to all three popular hardness conjectures.
    \item Much work remains to be done in quantum fine-grained complexity as a whole. A natural starting point is of course the translation of classical reductions to the quantum case (for which a good starting point can be, e.g., \cite{williams_fine-grained_2019}), but in the long term it is worthwhile to consider which reductions are only possible in the quantum case -- to find a true quantum complexity web of fine-grained reductions, guiding the possibilities and impossibilites of quantum algorithm design.
\end{itemize}

\section{Acknowledgements}
Subhasree Patro is supported by the Robert Bosch Stiftung. Harry Buhrman, Subhasree Patro, and
Florian Speelman are additionally supported by NWO Gravitation grants NETWORKS and QSC, and
EU grant QuantAlgo. This
work was supported by the Dutch Ministry of Economic Affairs and Climate Policy (EZK), as part of
the Quantum Delta NL programme.

\bibliographystyle{alpha}
\bibliography{references}
\end{document}